\newtheorem{theorem}{Theorem}
\newtheorem{proposition}[theorem]{Proposition}
\newtheorem{corollary}[theorem]{Corollary}
\newtheorem{definition}[theorem]{Definition}
\newcommand{\pNP}{{\normalfont\emph{para}-\NP{}}}
\newcommand{\XP}{{\normalfont XP}}
\newcommand{\W}[1][XXX]{{\normalfont W[#1]}}
\newcommand{\M}[1][XXX]{{\normalfont M[#1]}}
\newcommand{\A}[1][XXX]{{\normalfont A[#1]}}
\newcommand{\AW}[1][XXX]{{\normalfont AW[#1]}}
\newcommand{\EW}[1][XXX]{{\normalfont EW[#1]}}
\newcommand{\EXPW}[1][XXX]{{\normalfont EXPW[#1]}}
\newcommand{\SW}[1][XXX]{{\normalfont S[#1]}}
\newcommand{\NP}{{\normalfont NP}}
\newcommand{\Poly}{{\normalfont P}}
\newcommand{\FPT}{{\normalfont FPT}}
\newcommand{\EPT}{{\normalfont EPT}}
\newcommand{\EXPT}{{\normalfont EXPT}}
\newcommand{\SUBEPT}{{\normalfont SUBEPT}}
\newcommand{\pcp}[2]{{\rm PCP[$#1$, $#2$]}}
\newcommand{\ppcp}[2]{{\rm\emph{p}-PCP[$#1$, $#2$]}}
\newcommand{\PSPACE}{{\normalfont PSPACE}}
\newcommand{\NEXP}{{\normalfont NEXP}}
\newcommand{\WSAT}[1]{\textsc{WSAT(#1)}}
\newcommand{\AWSATL}[2]{\textsc{AWSAT$_{#1}$(#2)}}
\newcommand{\AWSAT}[1]{\textsc{AWSAT(#1)}}
\newcommand{\WSATTN}{\WSAT{2-CNF$^{-}$}}
\newcommand{\Card}[1]{|#1|}
\newcommand{\Yes}{\textsc{Yes}}
\newcommand{\No}{\textsc{No}}
\newcommand{\SB}{\{\,}
\newcommand{\SM}{\;{|}\;}
\newcommand{\SE}{\,\}}
\newcommand{\pcproblem}[4]{\begin{samepage}\begin{quote} \textsc{#1}\\ 
\textit{Instance:} #2\\ \textit{Parameter:} #3\\ \textit{Question:} #4 \end{quote}\end{samepage}}
\DeclareMathOperator{\var}{var}
\DeclareMathOperator{\cl}{cl}
\DeclareMathOperator{\arity}{arity}
\begin{document}

\title{A Proof Checking View of Parameterized Complexity}
\author{Luke Mathieson}
\date{}
\maketitle

\begin{abstract}
The PCP Theorem is one of the most stunning results in computational complexity theory, a culmination of a series of results regarding proof checking it exposes some deep structure of computational problems. As a surprising side-effect, it also gives strong non-approximability results. In this paper we initiate the study of proof checking within the scope of Parameterized Complexity. In particular we adapt and extend the \pcp{n\log\log n}{n\log\log n} result of Feige \emph{et al.} to several parameterized classes, and discuss some corollaries.
\end{abstract}

\section{Introduction}

The straight-forward view of most computational complexity classes is one of what problems can solved given certain computing power and resource restrictions. Alongside this is the \emph{verification} view of complexity, where we ask not what can be computed within a given set of restrictions, but whether a given solution can be verified under certain restrictions. The most famous of these is of course the equivalent definitions of $\NP{}$ as the class of all problems that can be solved in nondeterministic polynomial time or verified in deterministic polynomial time. This definition may be thought of as a \emph{proof system}, where a Turing Machine (the verifier) has access to the input and a proof, and in polynomial time checks that the proof is correct.

With access to a random bit string, it is possible to reduce the number of bits that the verifier reads from the proof. In fact, in the case of \NP{}, this is quite a surprising reduction; with only a logarithmic number of random bits, we need only a \emph{constant} number of bits from the proof to verify the proof. The trade-off being that if the proof is false, we may incorrectly accept it, but with probability at most one half.

Such proof systems have been well studied for traditional complexity classes such as \NP{}, \PSPACE{} and \NEXP{}. In this paper we begin to look at parameterized complexity through the same lens. In particular we demonstrate a relatively simple but non-trivial proof system for $\W[1]$. We also extend this to $\W[2]$, $\M[1]$, the bounded classes $\EW[1]$, $\EXPW[1]$ \& $\SW[1]$ and the classes of the $A$-hierarchy up to $\AW[*]$.

\subsection{Useful History}

This idea of classifying languages by membership proofs began to attract serious attention in the early to mid eighties, with Goldwasser, Micali \& Rackoff's~\cite{GoldwasserMR85} introduction of the idea of \emph{interactive proofs} (later published in a more complete form~\cite{GoldwasserMR89}) and Babai's~\cite{Babai85, BabaiM88} \emph{Arthur-Merlin games}. Both probabilistic approaches to proof verification.

Over time these classes were linked back to traditionally defined complexity classes. The class of problems with interactive proofs is precisely \PSPACE{}~\cite{Shamir90}. The class of problems with Arthur-Merlin style verifiers that use a polynomial number of rounds turns out to be the same as the class of problems with interactive proofs~\cite{GoldwasserS86}. If multiple, non-communicating provers (defined in~\cite{Ben-OrGKW88}) are allowed we obtain \NEXP{}~\cite{BabaiFL91, BabaiFL92} (Ben-Or \emph{et al.}~\cite{Ben-OrGKW88} also showed that for any number of provers, there was an equivalent protocol with at most two provers).

This work culminated in the development of \emph{probabilistically checkable proofs}~\cite{AroraS92} and what is now known as the PCP Theorem:

\begin{theorem}[The PCP Theorem~\cite{AroraLMSS98, AroraS98}]\label{thm:pcp}
$\NP{}$ is the class of all languages that can be verified by a polynomial-time probabilistic Turing Machine (the verifier) that can access at most $O(\log n)$ random bits and at most $O(1)$ bits of an oracle string (the proof) such that any input that is in the language is accepted with probability $1$ and any input that is not in the language is accepted with probability at most $\frac{1}{2}$.
\end{theorem}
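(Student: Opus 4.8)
The plan is to prove the two inclusions separately. The containment $\pcp{O(\log n)}{O(1)} \subseteq \NP{}$ is the routine direction: given a verifier $V$ using $c\log n$ random bits, an $\NP{}$ machine guesses the oracle string $\pi$ (only polynomially many of its positions can ever be queried, since there are just $2^{c\log n} = n^{O(1)}$ random strings and each run uses $O(1)$ queries), and then deterministically cycles through all $n^{O(1)}$ random strings, simulating $V$ on each. By perfect completeness some $\pi$ makes $V$ accept on \emph{every} random string exactly when the input is in the language, and by soundness no $\pi$ fools $V$ on more than half the random strings; hence this procedure is correct, and it plainly runs in polynomial time.

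For the hard direction $\NP{} \subseteq \pcp{O(\log n)}{O(1)}$ I would first reduce to a ``gap'' version of an $\NP{}$-complete constraint satisfaction problem. It suffices to exhibit, for \textsc{3-Sat}, a polynomial-time verifier reading $O(\log n)$ random bits and $O(1)$ proof bits with perfect completeness and with soundness error bounded by some absolute constant $s < 1$; the error is then pushed below $\tfrac12$ by performing $O(1)$ independent repetitions, which cost only a constant factor more randomness and queries. Equivalently, one shows that it is $\NP{}$-hard to distinguish satisfiable \textsc{3-Cnf} formulas from those in which no assignment satisfies more than a $(1-\varepsilon)$-fraction of the clauses, for some fixed $\varepsilon > 0$.

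There are two standard routes to this gap hardness. The \emph{proof-composition} route makes explicit the ``scaling down'' of the two-prover characterisation of \NEXP{} mentioned above: one arithmetizes \textsc{3-Sat} over a finite field so that a satisfying assignment becomes a low-degree polynomial, and an ``outer'' verifier checks membership using a \emph{low-degree test} together with a \emph{sum-check}-style subroutine, consuming only $O(\log n)$ random bits but $\mathrm{polylog}(n)$ queries; separately one builds an ``inner'' verifier from the Hadamard/long code and \emph{linearity testing}, using $O(1)$ queries but polynomially many random bits; then \emph{proof composition} answers each query of the outer verifier by an inner PCP run on its short local view, and iterating this a constant number of times drives the query complexity down to $O(1)$ while the randomness stays $O(\log n)$. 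Alternatively, Dinur's \emph{gap amplification} argument starts from the inverse-polynomial gap handed over by the Cook--Levin reduction and alternately applies graph powering on an expanderized constraint graph (which multiplies the gap by a constant while enlarging the alphabet) and an alphabet-reduction step realized by an assignment tester; after $O(\log m)$ rounds the gap is a constant and the instance is still of polynomial size.

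The main obstacle, in either route, is controlling soundness under a local-to-global operation. In the composition route it is the analysis of the low-degree test --- proving that any oracle which passes it with non-negligible probability is genuinely close to a single low-degree polynomial, so that self-correction becomes available --- compounded by the fact that naive composition destroys soundness, forcing one to work instead with \emph{robust} verifiers / PCPs of proximity. In the Dinur route it is the gap-amplification lemma for graph powering, whose proof rests on a delicate second-moment estimate for random walks on an expander. Everything else --- the arithmetization, linearity testing over $\mathrm{GF}(2)$, derandomizing repetitions via expander neighbourhoods, and the bookkeeping that keeps the randomness logarithmic and the proof of polynomial size --- is comparatively mechanical once these cores are in hand.
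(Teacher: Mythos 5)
The paper does not prove this statement at all: Theorem~\ref{thm:pcp} is imported from the literature, credited to Arora--Safra and Arora \emph{et al.}~\cite{AroraS98, AroraLMSS98}, with Dinur's constraint-satisfaction proof~\cite{Dinur07} mentioned as a more accessible alternative, so there is no paper-internal argument to compare yours against. Judged on its own terms, your sketch is faithful to exactly those cited proofs. The containment $\pcp{\log n}{1} \subseteq \NP{}$ is handled correctly (guess the polynomially many proof positions that can ever be touched, enumerate all $n^{O(1)}$ random strings, accept iff all runs accept); the only nuance worth noting is adaptivity of queries, which changes nothing since $O(1)$ adaptive queries still touch only $2^{O(1)}$ positions per random string. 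For the converse you correctly reduce the task to gap-\textsc{3-Sat} with constant soundness amplified by $O(1)$ repetitions, and you accurately describe both standard routes: arithmetization with low-degree testing and sum-check as the outer verifier, Hadamard/long-code and linearity testing as the inner verifier, glued by proof composition (made sound via robust verifiers/PCPs of proximity); or Dinur's alternation of expanderized graph powering and alphabet reduction over $O(\log m)$ rounds. You also correctly locate where the genuine difficulty sits --- the soundness analysis of the low-degree test and of composition in the first route, the gap-amplification lemma in the second. The caveat is that what you have written is an architectural outline rather than a proof: each of these cores is named and delegated, not established, so as a self-contained argument your text has essentially the same epistemic status as the paper's citation. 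That is a reasonable outcome for a result of this scale, but it should be presented as a survey of the known proofs, not as a new derivation.
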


Dinur~\cite{Dinur07} gives more accessible proof, via constraint satisfaction.

Far from being a theoretical curiosity, PCPs have a number of applications across computer assisted mathematics~\cite{BabaiFLS91} and cryptology~\cite{GoldwasserMR89} but possibly most interestingly PCP results have implications for approximation algorithms. It is PCP results that led to inapproximability results for \textsc{Max-Word}~\cite{Condon91}, \textsc{Max-3SAT}~\cite{AroraLMSS98}, \textsc{Max-Clique}~\cite{FeigeGLSS96} and in general that if $\text{\Poly{}} \neq \text{\NP{}}$ then no MAXSNP-hard problem is in PTAS.

\section{Parameterized Complexity Theory}

A \emph{parameterized problem} is a decision problem augmented with a special input, the \emph{parameter}. This may be more formally viewed as a language over some alphabet with a \emph{parameterization} that provides a positive integer parameter for each instance. 

\begin{definition}[Parameterized Problem]
A parameterized problem over alphabet $\Sigma$ is a pair $(\Pi,\kappa)$ where $\Pi \subseteq \Sigma^{*}$ and $\kappa:\Sigma^{*}\rightarrow\mathbb{N}$ is a parameterization.
\end{definition}

Typically given an instance, the parameterization (as a function) is implied and we treat inputs as being accompanied by a integer, usually denoted $k$. 

Parameterization allows a more relaxed notion of tractability:

\begin{definition}[Fixed-parameter Tractability]
A parameterized problem $(\Pi,\kappa)$ is fixed-parameter tractable if there is an algorithm $\mathcal{A}$ and a computable function $f$ such that for all inputs $(x,\kappa(x))$ the algorithm $\mathcal{A}$ decides if $x\in\Pi$ in time bounded by $f(\kappa(x))\cdot{}\Card{x}^{O(1)}$. The class of all fixed-parameter tractable problems is \FPT{}. 
\end{definition}

This then gives a natural reduction schema:

\begin{definition}[FPT Reductions]
Given two parameterized problems $(\Pi_{1},\kappa_{1})$ over $\Sigma_{1}$ and $(\Pi_{2},\kappa_{2})$ over $\Sigma_{2}$, an fpt reduction from $(\Pi_{1},\kappa_{1})$ to $(\Pi_{2},\kappa_{2})$ is a mapping $R:\Sigma_{1}^{*}\rightarrow\Sigma_{2}^{*}$ such that for all $x\in\Sigma_{1}^{*}$:
\begin{enumerate}
\item $x \in \Pi_{1} \Leftrightarrow R(x) \in \Pi_{2}$.
\item $R$ can be computed in time bounded by $f(\kappa(x))\cdot{}\Card{x}^{O(1)}$.
\item There is a computable function $g$ such that $\kappa_{2}(R(x)) \leq g(\kappa_{1}(x))$.
\end{enumerate}
\end{definition}

The last condition results in a very rich intractability theory for parameterized complexity. We will give details of the classes relevant for this paper, but a much fuller treatment can be found in the monographs of Downey \& Fellows~\cite{DowneyFellows99} and Flum \& Grohe~\cite{FlumGrohe06}.

We first define a hierarchy of propositional logic formul\ae{}. Let $\{a_{i}\}$ be a set of boolean literals, then we define the following formula classes:
\begin{eqnarray*}
\Gamma_{0,d} := \SB a_{1} \wedge \ldots \wedge a_{c} \SM c \leq d \SE\\
\Delta_{0,d} := \SB a_{1} \vee \ldots \vee a_{c} \SM c \leq d \SE
\end{eqnarray*}

These can then be recursively stacked to give the classes $\Gamma_{t,d}$ and $\Delta_{t,d}$:

\begin{eqnarray*}
\Gamma_{t,d} := \SB \bigwedge_{i \in I} \phi_{i} \SM \phi_{i} \in \Delta_{t-1,d}  \SE\\
\Delta_{t,d} := \SB \bigvee_{i \in I} \phi_{i} \SM \phi_{i} \in \Gamma_{t-1,d}  \SE
\end{eqnarray*}

In addition we denote by $\Phi^{+}$ the subclass of a class of propositional formul\ae{} $\Phi$ where no literals are negated and by $\Phi^{-}$ the subclass of $\Phi$ where all literals are negated. Given a propositional formula over a variable set $X$ a truth assignment that sets $k$ variables of $X$ to \textsc{TRUE} is called a \emph{weight $k$ assignment}\footnote{This use of ``weight'' is standard in the parameterized complexity literature, but may conflict with definitions from other areas. In this paper, when we refer to the weight of an assignment, this is the meaning we intend.} or an assignment of weight $k$.

The fundamental problem for many parameterized intractability classes is the \textsc{Weighted Satisfiability} problem:

\pcproblem{WSAT($\Phi$)}{A boolean formula $\phi \in \Phi$ and a positive integer $k$.}{$k$.}{Is there a satisfying assignment for $\phi$ of weight $k$?}

We can then define the $W$-hierarchy:

\[\W[t] = \left[\WSAT{$\Gamma_{t,d}$}\right]^{\FPT{}}\]

where $t+d > 2$ and $[X]^{\FPT{}}$ denotes the closure of a parameterized problem $X$ under fpt reductions.

Even though we do not have quite the latitude to reduce the structure of the formula as in classical complexity (where everything in \NP{} can be reduced to a formula in 3-CNF), we can impose slightly more restriction to the formul\ae{}. In particular:

\[\W[1] = \left[\WSAT{$\Gamma_{1,2}^{-}$}\right]^{\FPT{}}\]

and

\[\W[2] = \left[\WSAT{$\Gamma_{2,1}^{+}$}\right]^{\FPT{}}\]

So for every problem in $\W[1]$ we can convert any instance into an instance of the \textsc{Weighted Satisfiability} problem where the formula is in 2-CNF and all literals are negated and for every problem in $\W[2]$ we can convert any instance into a CNF formula (of unbounded clause length) where all literals are positive (similar statements can be made for the other classes in the $W$-hierarchy, \emph{q.v.}~\cite{FlumGrohe06}).

At the other end of the parameterized intractability scale is the direct definitional analog of \NP{}:

\begin{definition}[\pNP{}]
A parameterized problem $(\Pi, \kappa)$ is in \pNP{} if there is a computable function $f$ and nondeterministic Turing Machine that on input $(x,\kappa(x))$ decides $x\in\Pi$ in time bounded by $f(\kappa(x))\cdot\Card{x}^{O(1)}$.
\end{definition}

It turns out however that \pNP{}-complete problems seem much harder than $\W[1]$-complete problems and that $\W[1]$ provides a more natural analog of \NP{}\footnote{Very loosely speaking, barring a collapse, \pNP{}-complete problems correspond to problems with time complexity $(\kappa(x))^{\Card{x}}$ or worse, whereas $\W[1]$-complete problems have complexity $\Card{x}^{\kappa(x)}$ (this bound is more formal than the given \pNP{} one as the $W$-hierarchy is contained in \XP{}~\cite{FlumGrohe06}.)}.

The class \XP{} provides an alternate perspective on parameterized intractability:

\begin{definition}
A parameterized problem $(\Pi, \kappa)$ is in \XP{} if there exists a computable function $f$ such that every instance $(x,\kappa(x))$ is decidable in time
\[
\Card{x}^{f(\kappa(x))}+f(\kappa(x))
\]
\end{definition} 

The entirety of the $W$-hierarchy is contained in \pNP{} $\cap$ \XP{}.

\XP{} in a certain sense plays a role similar to a parameterized version of EXPTIME, and as such contains a hierarchy that bears a relationship to the polynomial hierarchy and \PSPACE{}, the $A$-hierarchy.

Similar to the polynomial hierarchy, the $A$-hierarchy can be characterized by alternating quantified satisfiability problems. In this case of course, there is a parameterized flavour:

\pcproblem{AWSAT$_{l}$($\Phi$)}{A boolean propositional formula $\phi \in \Phi$, with the variable set $X$ partitioned into $l$ sets $X_{1}, \ldots, X_{l}$ and positive integers $k_{1}, \ldots, k_{l}$.}{$k = \sum_{i\in [l]}k_{i}$.}{Is there a $k_{1}$-sized subset of $X_{1}$ such that for all $k_{2}$-sized sets of $X_{2}$ there exists a $k_{3}$-sized subset of $X_{3}$... (\&c. for $l$ alternations) such that setting those variables to true satisfies $\phi$?}

If we employ the notation $\forall_{k}$ and $\exists_{k}$ to denote ``for all $k$-sized subsets'' and ``there exists a $k$-sized subset'' respectively, we can reframe the slightly awkward definition of \textsc{AWSAT$_{l}$} by asking if
\[
\exists_{k_{1}}X_{1}\forall_{k_{2}}X_{2}\ldots Q_{k_{l}}X_{l}\phi
\]
is true, where $Q\in\{\forall,\exists\}$. If we remove the bound on $l$, then we obtain the \textsc{AWSAT} problem, which has the same essential structure. When talking about this family of problems informally, we will omit the subscript and refer to them generally as \textsc{AWSAT} problems. These classes then provide the basis for the $A$-hierarchy:
\[
\A[l] = \left\{\begin{array}{ll}
[\AWSATL{l}{$\Gamma_{1,2}^{-}$}]^{\FPT{}} & \text{for $l$ odd}\\
{}[\AWSATL{l}{$\Delta_{1,2}^{+}$}]^{\FPT{}}& \text{for $l$ even}\\
\end{array}\right.
\]

One interesting superclass of the $A$-hierarchy is $\AW[*]$:
\[
\AW[*] = [\AWSAT{$\Gamma_{1,2}^{-}$}]^{\FPT{}}
\]
Thus $\AW[*]$ is not entirely dissimilar to \PSPACE{}\footnote{Or something between \PSPACE{} and PH, though this also imprecise as natural parameterized versions of some \PSPACE{}-complete problems are $\AW[*]$-complete. Conversely \AWSAT{PROP}, the parameterized alternating satisfiability problem for the class of all propositional formul\ae{}, is $\AW[SAT]$-complete and $\AW[*] \subseteq \AW[SAT]$.}, however in the parameterized setting, there is no single analog of \PSPACE{}, with its role being spread between $\AW[*]$, $\AW[SAT]$, $\AW[P]$, XL and para-\PSPACE{}~\cite{FlumGrohe06}.

\subsection{Bounded Parameterized Complexity Classes}

In the definition of \FPT{} the function $f$ that gives the dependence on the parameter is only restricted to being computable. We can define analogs of \FPT{} and its intractability hierarchies with stronger restrictions on $F$ that still retain very similar structures.

\begin{definition}[\EXPT{}]
A parameterized problem $(\Pi,\kappa)$ is in \EXPT{} if there is an algorithm $\mathcal{A}$ and such that for all inputs $(x,\kappa(x))$ the algorithm $\mathcal{A}$ decides if $x\in\Pi$ in time bounded by $2^{\kappa(x)^{O(1)}}\cdot{}\Card{x}^{O(1)}$.
\end{definition}

\begin{definition}[\EPT{}]
A parameterized problem $(\Pi,\kappa)$ is in \EPT{} if there is an algorithm $\mathcal{A}$ and such that for all inputs $(x,\kappa(x))$ the algorithm $\mathcal{A}$ decides if $x\in\Pi$ in time bounded by $2^{O(\kappa(x))}\cdot{}\Card{x}^{O(1)}$.
\end{definition}

\begin{definition}[\SUBEPT{}]
A parameterized problem $(\Pi,\kappa)$ is in \SUBEPT{} if there is an algorithm $\mathcal{A}$ and such that for all inputs $(x,\kappa(x))$ the algorithm $\mathcal{A}$ decides if $x\in\Pi$ in time bounded by\footnotemark\ $2^{o^{eff}(\kappa(x))}\cdot{}\Card{x}^{O(1)}$.
\end{definition}

\footnotetext{$f\in o^{eff}(g)$ if there exists a computable, nondecreasing, unbounded function $h:\mathbb{N}\rightarrow\mathbb{N}$ such that $f(k) \leq \frac{g(k)}{h(k)}$.}

Typically the parameterizations of problems in \SUBEPT{} are of a different character to normal parameterizations. In the subexponential theory the parameterizations play the role of ``size measures'' for the problem, rather than being independent of the size of the problem. Such measures may be for example the number of variables in a logic sentence or the number of edges and vertices in a graph (this is also in contrast to the length of the \emph{encoding} of the problem).

These classes are accompanied by analogs of fpt reductions. These reduction schemes have slight technical differences to fpt reductions (\emph{q.v.} \cite{Weyer04}, \cite{FlumGroheWeyer06} and~\cite{ImpagliazzoPaturiZane01},  or \cite{FlumGrohe06} for a collected survey of these and other related work), however they still produce hierarchies akin to the $W$-hierarchy, for $t \geq 2$:

\[\EXPW[t] = \left[\WSAT{$\Gamma_{t,1}$}\right]^{\EXPT{}}\]

\[\EW[t] = \left[\WSAT{$\Gamma_{t,1}$}\right]^{\EPT{}}\]

Although the first levels of these hierarchies are more technically delicate than the $W$-hierarchy, we still have the following key identities:

\[\EXPW[1] = \left[\WSAT{$\Gamma_{1,2}^{-}$}\right]^{\EXPT{}}\]

and

\[\EW[1] = \left[\WSAT{$\Gamma_{1,2}^{-}$}\right]^{\EPT{}}\]

The hierarchy corresponding to \SUBEPT{} is mildly different\footnote{Incidentally \SUBEPT{} and the $S$-hierarchy correspond to parameterizations of the Exponential Time Hypothesis, making them particularly interesting parameterized classes. In fact, the entire $S$-hierarchy is contained in \EPT{}, with \EPT{} and \SUBEPT{} bearing a similar relationship as \XP{} and \FPT{}.}:

\[\SW[t] = \bigcup_{d\geq 1}\left[\text{\textsc{SAT($\Gamma_{t,d}$)}}\right]^{serf}\]

However we fortunately we also have that:

\[\SW[1] = \left[\text{$s$-$var$-}\WSAT{$\Gamma_{1,2}$}\right]^{serf}\]

Where \textsc{$s$-$var$-WSAT} is a different parameterization of the weighted satisfiability problem:

\pcproblem{$s$-$var$-WSAT($\Phi$)}{A formula $\phi \in \Phi$, an integer $k$.}{$\var(\phi)$ (the number of variables in $\phi$).}{Does $\phi$ have a satisfying assignment where $k$ variables are set to \textsc{True}?}

\subsection{The Miniaturization Isomorphism and the $M$-Hierarchy}

The $S$-hierarchy, despite being a bounded hierarchy of parameterized classes, reflects structure in the unbounded theory. This structure can be elucidated via the \emph{miniaturization isomorphism}. Given a parameterized problem $(\Pi,\kappa)$ over $\Sigma^{*}$ the \emph{miniaturization} of the problem is

\pcproblem{Mini-$(\Pi,\kappa)$}{$x\in\Sigma^{*}$, and $m\in\mathbb{N}$ in unary such that $\Card{x}\leq m$.}{$\lceil\frac{\kappa(x)}{\log m}\rceil$.}{Decide whether $x \in \Pi$.}

Under this mapping we have the following:
\[
(\Pi,\kappa) \in \text{\SUBEPT{}} \Leftrightarrow \text{\textsc{Mini-}}(\Pi,\kappa) \in \text{\FPT{}}
\]
Consequently we can define an intractability hierarchy via this relationship, the $M$-hierarchy. For the purposes of this paper we need only the following:
\[
(\Pi,\kappa) \in \SW[t]\text{-complete} \Leftrightarrow \text{\textsc{Mini-}}(\Pi,\kappa) \in \M[t]\text{-complete}
\]
However the $M$-hierarchy is closed under normal fpt reductions. The proofs of these results, and much more technical detail can be found in \cite{FlumGrohe06}, or the original papers~\cite{AbrahamsonDowneyFellows95, ChenCFHJKX04, ChenF04, ChenG07, ChenHKX04, DowneyEFPR03, FlumG04}, for context however, it is known that for all $t \geq 1$ we have $\M[t] \subseteq \W[t] \subseteq \M[t+1]$.

\section{Proof Checking, Interactive Proofs and PCPs}

\subsection{Notation and Notes}

For convenience we denote by $\mathbb{B}$ the set $\{0,1\}$.

The proof systems will often be phrased somewhat like interactive proofs, as this often seems an intuitive, natural presentation, however the proof string is in effect a table of polynomial coefficients indexed by length $m$ vectors over a field $\mathcal{F}$, along with the values of a truth assignment at points over this space.

\subsection{Basic Definitions}

\begin{definition}[PCP]
A \emph{Probabilistically Checkable Proof System} (a PCP) for a problem $\Pi$ over alphabet $\Sigma$ is a probabilistic polynomial-time Turing Machine $V$ that given input $x$ and access to a proof string $\sigma \in \Sigma^{*}$ satisfies the following conditions:
\begin{enumerate}
\item If $x$ is a \Yes{}-instance of $\Pi$, there is a $\sigma$ such that $V^{\sigma}$ accepts $x$ with probability $1$.
\item If $x$ is a \No{}-instance of $\Pi$, for every $\sigma$ the probability that $V^{\sigma}$ accepts $x$ is at most $\frac{1}{2}$.
\end{enumerate}
\end{definition}

The choice of $1$ and $\frac{1}{2}$ as the probabilities for the completeness and soundness of the verifier are in a sense somewhat arbitrary, for example, Babai, Fortnow \& Lund~\cite{BabaiFL91} use probabilities that vary with the length of the input, however the majority of results are stated directly with these probabilities, or are otherwise compatible.

\begin{definition}[Restricted PCP]
Given two functions $r, p: \mathbb{N} \rightarrow \mathbb{N}$, a PCP is \emph{$(r,p)$-restricted} if for every input $x$, $V$ uses at most $O(r(\Card{x}))$ random bits and $O(p(\Card{x}))$ bits of the proof string $\sigma$.
\end{definition} 

The set of all problems with a $(r,p)$-restricted PCP is typically denoted \pcp{r}{p}. With this notation we can thus succinctly restate Theorem~\ref{thm:pcp}:

\begin{theorem}[PCP Theorem~\cite{AroraLMSS98, AroraS98}]
\NP{} $=$ \pcp{\log n}{1}.
\end{theorem}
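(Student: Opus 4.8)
The plan is to observe that this statement carries no content beyond Theorem~\ref{thm:pcp}: it is simply that theorem rewritten in the notation introduced immediately above. First I would unpack the class \pcp{\log n}{1}, which by the definitions of PCP and Restricted PCP is exactly the set of languages $L$ admitting a probabilistic polynomial-time verifier $V$ that, with oracle access to a proof string $\sigma$, uses $O(\log n)$ random bits, reads $O(1)$ bits of $\sigma$, accepts every \Yes{}-instance with probability $1$ for some $\sigma$, and accepts every \No{}-instance with probability at most $\frac{1}{2}$ for every $\sigma$.

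Next I would check, clause by clause, that this description is verbatim the one appearing in Theorem~\ref{thm:pcp}: the probabilistic polynomial-time verifier with an oracle (proof) string; the $O(\log n)$ bound on random bits; the $O(1)$ bound on queried proof bits; the perfect completeness; and the soundness error at most $\frac{1}{2}$. Since the two descriptions coincide as predicates on languages, the class \pcp{\log n}{1} equals the class characterised in Theorem~\ref{thm:pcp}, and that theorem identifies the latter with \NP{}. Chaining the two equalities yields \NP{} $=$ \pcp{\log n}{1}.

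The only point requiring a moment's care --- and it is bookkeeping rather than an obstacle --- is that the $O(\cdot)$ in the definition of Restricted PCP already absorbs constant multiplicative factors, so ``$O(\log n)$ random bits'' and ``at most $c\log n$ random bits'' denote the same resource bound, and likewise ``$O(1)$ proof bits'' is the same as ``a fixed constant number of proof bits''; hence neither direction of the claimed equality loses any slack. In short, I would present this proposition as a corollary of Theorem~\ref{thm:pcp} obtained purely by definition chasing, rather than as a statement with independent mathematical content.
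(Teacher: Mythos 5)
Your proposal matches the paper exactly: the paper offers no independent proof, but explicitly presents this statement as a restatement of Theorem~\ref{thm:pcp} in the newly introduced \pcp{r}{p} notation, with the substance resting on the cited works of Arora et al. Your definition-chasing reduction to Theorem~\ref{thm:pcp} is precisely the intended content, so the proposal is correct and takes essentially the same approach.
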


\subsection{Arithmetization Protocols}

Lund~\emph{et al.}~\cite{LundFKN92} introduced a protocol for demonstrating PCP and interactive proof results which they used to show that every problem in \Poly{}$^{\#\text{\Poly{}}}$ has an interactive proof (a key step in motivating Shamir's~\cite{Shamir90} result).

This protocol has proven to be extremely useful and has been used in whole or part for many of the PCP related results~\cite{AroraLMSS98, AroraS98, BabaiFL91, FeigeGLSS96, Shamir90}. It is worthwhile to sketch an outline of this protocol to give an intuition for the working of the main result of this paper.

Given a complexity class $\mathcal{C}$ we select a suitable $\mathcal{C}$-complete problem $\Pi$ and produce a verifier that completes the following tasks:
\begin{enumerate}
\item For input $x$, the verifier constructs an arithmetical representation $\phi$ of $x$ such that the value of $\phi$ is dependent on whether $x$ is a \Yes{}-instance of $\Pi$ or not. For example we may construct an arithmetic formula from a boolean formula such that the arithmetic formula is non-zero if and only if the boolean formula is satisfiable.
\item A sufficiently large field over which to do the arithmetic is chosen. Typically this will be $\mathbb{Z}_{p}$ for some sufficiently large prime $p$.
\item The verifier then checks the arithmetical representation a variable at a time by instantiating a single variable and obtaining a simplified representation in one variable from the proof which it can use to compare against the expected value. If the simplified representation is satisfactory, the verifier picks a random value from the field, permanently sets the variable to this value and replaces the expected value by the evaluation of the simplified expression with that random value.
\item Step 3 is repeated until some value does not match expectation, at which point the proof is rejected, or until all variables have been instantiated at which point the expression is checked explicitly using elements of the solution obtained from the proof (\emph{e.g.} values from a truth assignment).
\end{enumerate}

The key to the effectiveness of this protocol is in the restriction on the arithmetic representation and the size of the field. For clarity of discussion we will assume the representation to be a multinomial and the field to be $\mathbb{Z}_{p}$ for a sufficiently large prime $p$.

If the multinomial is of constant degree $d$, and the polynomial simplification over one variable obtained from the proof is false, it can agree with the true polynomial in at most $d$ places~\cite{Schwartz80}. So if the proof is false, it can ``look true'' for only a small number of values ($d$), and eventually some iteration of checking will observe an erroneous value with high probability ($1-\frac{dr}{p}$ where $r$ is the number of iterations). 

\section{Parameterized PCPs}

Clearly we can adapt PCP notions to parameterized complexity.

\begin{definition}[Parameterized PCP]
A \emph{Parameterized Probabilistically Checkable Proof System} (parameterized PCP, or \emph{p}-PCP) for parameterized problem $\Pi$ over alphabet $\Sigma$ is a probabilistic \FPT{}-time Turing Machine $V$ that given input $(x,k)$, an instance of $\Pi$, and access to an proof string $\sigma \in \Sigma^{*}$ satisfies the following conditions:
\begin{enumerate}
\item If $(x,k)$ is a \Yes{}-instance of $\Pi$, there is a $\sigma$ such that $V^{\sigma}$ accepts $(x,k)$ with probability $1$.
\item If $(x,k)$ is a \No{}-instance of $\Pi$, for any choice of $\sigma$ the probability that $V^{\sigma}$ accepts $(x,k)$ is no greater than $\frac{1}{2}$.
\end{enumerate}
\end{definition}

As with non-parameterized PCPs, the completeness and soundness probabilities need not be $1$ and $\frac{1}{2}$, however these values are sufficient for our purposes and confusing the notation thus serves no purpose.

\begin{definition}[Restricted \emph{p}-PCP]
Given two functions $r,p: \mathbb{N}\times\mathbb{N} \rightarrow \mathbb{N}$ a \emph{p}-PCP is \emph{$(r,p)$-restricted} if for every input $(x,k)$ it uses $O(r(\Card{x}, k))$ random bits and at most $O(p(\Card{x}, k))$ bits of the proof string $\sigma$. 
\end{definition}

We denote the set of all problems with an $(r,p)$-restricted \emph{p}-PCP by \ppcp{r}{p}.

For certain extreme values of the parameters, we can use the \ppcp{r}{p} notation to express some of the parameterized classes.

\begin{itemize}
\item \FPT{} $=$ \ppcp{0}{0}, by definition problems in $\FPT{}$ have no access to a proof and need no randomness.
\item \FPT{} $=$ \ppcp{f(k)+\log n}{0}. An $\FPT{}$-time algorithm can try all possible $f(k)+\log n$ random strings.
\item \FPT{} $=$ \ppcp{0}{f(k)+\log n}. An $\FPT{}$-time algorithm can generate all proofs of length $f(k)+\log n$.
\item \pNP{} $=$ \ppcp{0}{f(k)n^{O(1)}}. By definition.
\end{itemize}

\subsection{A Non-trivial Parameterized PCP for W[1]}

\begin{theorem}\label{thm:wsat-pcp}
\sloppypar Let $(\phi, k)$ be an instance of \WSATTN{} where $\max\{\var(\phi),\cl(\phi)\} \leq 2^{m}$. There is an $(m\log m, m\log m)$-restricted probabilistic $\FPT{}$-time Turing Machine that rejects $(\phi, k)$ with high probability if $(\phi, k)$ is a \No{}-instance of \WSATTN{}. That is, \WSATTN{} $\in$ \ppcp{m\log m}{m \log m}.
\end{theorem}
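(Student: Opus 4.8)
The plan is to run the arithmetization protocol of Lund \emph{et al.}~\cite{LundFKN92} (as sketched in Section~3.3) on the $\W[1]$-complete problem \WSATTN{}, choosing the field and the ``index dimension'' so that the resources land at $O(m\log m)$, in exactly the scaled-down regime of Feige \emph{et al.}~\cite{FeigeGLSS96}. Write $N=\max\{\var(\phi),\cl(\phi)\}\leq 2^{m}$. After a routine preprocessing (pad the variable set up to exactly $2^{m}$ with dummies, each pinned to false by a unit clause, and pad the clause set up to a power of two), identify the variables with $\mathbb{B}^{m}$ and the clauses with $\mathbb{B}^{m'}$, $m'\leq m+1$. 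A weight-$k$ satisfying assignment is then a map $A:\mathbb{B}^{m}\to\mathbb{B}$ with $\sum_{v}A(v)=k$ such that for every clause $c=(\neg x_{i}\vee\neg x_{j})$ we have $A(i)\cdot A(j)=0$. Fix a prime $p=m^{O(1)}$, work over $\mathcal{F}=\mathbb{Z}_{p}$ (so $\log p=O(\log m)$), and let $\hat{A}$ be the multilinear extension of $A$ over $\mathcal{F}^{m}$, of degree $\leq 1$ in each variable and total degree $\leq m$. Let $\hat{I},\hat{J}:\mathcal{F}^{m'}\to\mathcal{F}^{m}$ be the multilinear extensions of the two clause-to-variable index maps; these are computable by the verifier from $\phi$ in time polynomial in $\Card{(\phi,k)}$.

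The proof string $\sigma$ supplies: the table of $\hat{A}$; the table of the multilinear polynomial $\hat{B}$ over $\mathcal{F}^{m'}$ whose value at clause $c$ should equal $\hat{A}(\hat{I}(c))\cdot\hat{A}(\hat{J}(c))$; the univariate ``partial-sum'' polynomials for the two sum-checks below; and auxiliary polynomials witnessing the required vanishing-ideal membership (Booleanity of $A$ on $\mathbb{B}^{m}$). The verifier then: (i) runs a low-degree test on $\hat{A}$ and on $\hat{B}$, each time picking a random line, reading the $O(m)$ values of the restriction, and checking it is a univariate polynomial of the expected degree; (ii) picks a uniformly random $c^{\ast}\in\mathcal{F}^{m'}$, self-corrects the values $\hat{B}(c^{\ast})$, $\hat{A}(\hat{I}(c^{\ast}))$, $\hat{A}(\hat{J}(c^{\ast}))$ (computing $\hat{I}(c^{\ast}),\hat{J}(c^{\ast})$ itself), and checks $\hat{B}(c^{\ast})=\hat{A}(\hat{I}(c^{\ast}))\hat{A}(\hat{J}(c^{\ast}))$ --- since both sides are low-degree once the low-degree tests pass, Schwartz--Zippel~\cite{Schwartz80} then certifies that the identity holds on all of $\mathbb{B}^{m'}$; (iii) runs the sum-check protocol to verify $\sum_{c\in\mathbb{B}^{m'}}\hat{B}(c)=0$ and $\sum_{v\in\mathbb{B}^{m}}\hat{A}(v)=k$, using one fresh random field element per round over $O(m)$ rounds, reading the supplied degree-$\leq 1$ partial-sum polynomials and closing each sum-check with one self-corrected evaluation of the relevant polynomial; (iv) checks the Booleanity witness identity at a random point. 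It accepts iff all checks pass.

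Completeness is immediate: from a genuine weight-$k$ satisfying assignment, the honest $\hat{A}$, $\hat{B}$ and partial-sum polynomials make every check succeed with probability $1$. For soundness, suppose $(\phi,k)$ is a \No{}-instance and fix any $\sigma$. If $\hat{A}$ or $\hat{B}$ is far from every low-degree polynomial, the low-degree test rejects with high probability; otherwise replace them by the nearby low-degree polynomials $\tilde{A},\tilde{B}$, which self-correction recovers with high probability. The random-point check then forces (whp) $\tilde{B}(c)=\tilde{A}(\hat{I}(c))\tilde{A}(\hat{J}(c))$ for all $c\in\mathbb{B}^{m'}$, and the Booleanity gadget forces $\tilde{A}|_{\mathbb{B}^{m}}$ to be a genuine $0/1$ assignment; the padding clauses kill the dummies, so $\tilde{A}|_{\mathbb{B}^{m}}$ is a bona fide assignment of some weight. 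Since no weight-$k$ satisfying assignment exists, either $\sum_{v}\tilde{A}(v)\neq k$ or $\sum_{c}\tilde{B}(c)=\sum_{c}\tilde{A}(\hat{I}(c))\tilde{A}(\hat{J}(c))>0$, and the corresponding sum-check rejects with probability $\geq 1-O(m)/p$. All error terms are of the form $O(\mathrm{poly}(m))/p$ together with the constant error of the low-degree test, so choosing the exponent in $p=m^{O(1)}$ large enough (and, if one insists, repeating a constant number of times) brings the total acceptance probability below $\tfrac12$.

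Finally the resource count: each random line, each sum-check round, each self-correction, and the point $c^{\ast}$ cost $O(m)$ field elements of randomness, i.e.\ $O(m\log p)=O(m\log m)$ random bits, and there are only $O(1)$ such objects; likewise each low-degree test, self-correction and sum-check reads $O(m)$ field elements from $\sigma$, i.e.\ $O(m\log m)$ proof bits. The verifier's own work --- arithmetizing $\phi$, evaluating $\hat{I},\hat{J}$ and the witness polynomials at the chosen points, and the arithmetic of the checks --- is polynomial in $\Card{(\phi,k)}$, hence in particular \FPT{}-time. The one genuinely delicate point, and the reason $\hat{B}$ is introduced rather than sum-checking $\sum_{c}\hat{A}(\hat{I}(c))\hat{A}(\hat{J}(c))$ directly, is precisely this accounting: the composition $\hat{A}\circ\hat{I}$ has total degree $\Theta(m^{2})$, so a direct sum-check would read $\Theta(m^{2})$ field elements per round and shatter the $O(m\log m)$ budget, whereas evaluating that composition at a \emph{single} random point keeps every query within budget. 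Balancing the index size $|\mathbb{B}^{m}|\geq N$ against both $O(m\log m)$ bounds by taking a constant-size base set, dimension $\Theta(m)$, and $\mathrm{poly}(m)$-size field is exactly the trade-off made by Feige \emph{et al.}, and I expect reproducing it cleanly here to be the main effort.
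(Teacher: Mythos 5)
Your overall architecture (arithmetize, multilinear/low-degree machinery, sum-check over a $\mathrm{poly}(m)$-size field so that everything costs $O(m\log m)$ bits) is the same scaled-down Feige~\emph{et al.}\ regime the paper uses, and your observation about avoiding the degree blow-up of $\hat{A}\circ\hat{I}$ is sensible. But there is a genuine soundness gap in step (iii): you verify clause satisfaction by sum-checking $\sum_{c\in\mathbb{B}^{m'}}\hat{B}(c)=0$ over $\mathbb{Z}_{p}$ with $p=m^{O(1)}$, and you argue that for a \No{}-instance this sum is ``$>0$''. Over $\mathbb{Z}_{p}$ that inference is invalid: the summands on the cube are $0/1$, there can be up to $2^{m'}\gg p$ clauses, so the number of violated clauses can be a positive multiple of $p$ and the sum wraps to $0$. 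Concretely, a prover who honestly encodes a weight-$k$ assignment violating exactly $p$ clauses passes your low-degree tests, consistency check, Booleanity check, weight check and both sum-checks with probability $1$, so the verifier accepts a \No{}-instance with certainty. Enlarging the field to $p>2^{m}$ repairs soundness but costs $\Theta(m)$ bits per field element and pushes the randomness and query complexity to $\Theta(m^{2})$, breaking the $m\log m$ bound.

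This is exactly the point where the paper's proof (following Feige~\emph{et al.}) does something you omitted: instead of summing the per-clause violation values directly, it forms $S(A)=\sum_{z}SC(A,z)\cdot\prod_{i}r_{i}^{z_{i}}$ with independently random $r_{1},\ldots,r_{m}\in\mathcal{F}$, i.e.\ it evaluates the multilinear extension of the clause-violation indicator at a random point. Each clause then contributes to a distinct coefficient of a multilinear polynomial in the $r_{i}$, so if any clause is violated the combined value is nonzero with probability at least $1-m/\Card{\mathcal{F}}$ by Schwartz--Zippel, no matter how many clauses are violated, and the small field suffices. Your proposal needs this (or an equivalent fingerprinting of the $2^{m'}$ clause values) inserted before the sum-check; note also that the same mod-$p$ wrap-around in principle touches the weight equation $\sum_{v}\hat{A}(v)=k$ (an assignment of weight $k+p$ passes it), a subtlety the paper's own weight check does not discuss either. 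The remaining differences --- your use of generic low-degree testing plus self-correction and an auxiliary table $\hat{B}$ with index maps $\hat{I},\hat{J}$, versus the paper's verifier computing the incidence functions $C_{c,i}$ itself, using the Babai--Fortnow--Lund/Feige~\emph{et al.}\ multilinearity test, and sum-checking the composed multilinear expression over the $x_{1},x_{2},z$ blocks directly --- are legitimate alternative routes and do not by themselves cause problems.
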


\begin{proof}
The protocol will follow the same general format as those of Lund \emph{et al.}~\cite{LundFKN92}, Babai, Fortnow \& Lund~\cite{BabaiFL91} and particularly Feige \emph{et al.}~\cite{FeigeGLSS96} in that we will construct an arithmetic representation of $\phi$ and use the proof to evaluate this function pointwise.

Let $\phi$ be a \textsc{2-CNF$^{-}$} with smallest $m$ such that $2^{m} \geq\{\var(\phi), \cl(\phi)\}$. Denote each clause and variable by a binary string over $m$ bits.

For $v \in \mathbb{B}^{m}$ and $i \in \{1,2\}$ define a set of functions $C_{c,i}: \mathbb{B}^{m} \rightarrow \mathbb{B}$ as
\[
C_{c,i}(v) = \left\{ \begin{array}{cl} 1 & \text{if $v$ is the $i^{th}$ variable of clause $c$} \\ 0 & \text{otherwise}\end{array}\right. 
\]
This can be done in such a fashion that each $C_{c,i}$ is multilinear in $m$ variables. We sketch an example; say that $v = v_{1}v_{2}v_{3} = 101$ is the $1^{st}$ variable of clause $c$, then $C_{c,1} = v_{1}(1-v_{2})v_{3}$. Then the only place (over $\mathbb{B}^{3}$) where this is $1$ is at $101$.

Let $A:\mathbb{B}^{m} \rightarrow \mathbb{B}$ be a truth assignment to the variables of $\phi$.

We then define the following function over some sufficiently large field.
\[
SC(A,y) = \sum_{x_{1},x_{2}\in \mathbb{B}^{m}} \prod_{i \in \{1,2\}} C_{y,i}(x_{i})A(x_{i})
\]
This evaluates to $0$ if and only if $A$ is a satisfying assignment for clause $y$.
Then $\phi$ in its entirety, can be expressed as:
\[
S(A) = \sum_{z\in\mathbb{B}^{m}} SC(A,z)\cdot\prod_{i \in [1,m]} r_{r}^{z_{i}}
\]
Where $z_{i}$ is the $i^{th}$ bit of the binary representation of $z$ and $(r_{1},\ldots,r_{m})$ is a set of independently chosen random numbers from $\mathcal{F}$. This additional term is included to ensure with high probability that in the extended function the sum is zero only when all clauses evaluate to zero under $A$ (again, Feige \emph{et al.}~\cite{FeigeGLSS96} demonstrate the correctness of this method).
However we must also verify that:
\[
\sum_{z \in \mathbb{B}^{m}} A(z) = k
\]
The first function now evaluates to zero if and only if all the clauses are satisfied and the second evaluates to $k$ if and only if the weight of the truth assignment is $k$.

We now employ the following proposition:
\begin{proposition}[\cite{BabaiFL91}, \cite{FeigeGLSS96}]
Given a field $\mathcal{F}$, every boolean function $f$ has a unique multilinear extension over $\mathcal{F}$. Moreover the value the extension at any point can be computed in time $2^{\arity(f)}$.
\end{proposition}

In particular we can compute the multilinear extension of $C$ in any field of our choosing. Then assuming that $A$ is close to multilinear, $S$ is a multinomial of constant degree. Of course we cannot simply compute $A$ in $\FPT{}$-time, otherwise we'd have no reason for a $p$-PCP! However Babai, Fortnow \& Lund~\cite{BabaiFL91} demonstrate a procedure for testing multilinearity of a function that fails with high probability if the function is not multilinear and succeeds otherwise. Feige \emph{et al.}~\cite{FeigeGLSS96} improve this test, reducing the number of random and proof bits required to $O(m\log m)$.

We may now apply a protocol in the style of Lund \emph{et al.}~\cite{LundFKN92}, though Feige \emph{et al.}'s~\cite{FeigeGLSS96} version of the protocol is the direct inspiration.

Given a multinomial $h$ of constant degree $d$ over $q$ variables the function $g_{i}(x_{i})$ where the first $i-1$ variables are randomly instantiated
\[
g_{i}(x_{i}) = \sum_{x_{i+1}, \ldots, x_{q} \in \mathbb{B}} h(r_{1}, \ldots, r_{i-1}, x_{i}, \dots, x_{q})
\]
is a polynomial of degree $d$.

Assuming $A$ is multilinear with high probability (to ensure the degree bound of the multinomial), given an expected value $a_{i-1}$ we perform the $i^{th}$ iteration of the proof check as follows:

\begin{enumerate}
\item Obtain from the proof the $d$ coefficients of the polynomial $g_{i}'$ that is purported to be $g_{i}$.
\item Check that $g_{i}'(0)+g_{i}'(1) = a_{i-1}$, if not, then reject.
\item If the first check passes, we may still have $g_{i} \neq g_{i}'$. However they can agree at at most $d$ points in $\mathcal{F}$. We can check this with high probability ($1-\frac{d}{\Card{\mathcal{F}}}$) by randomly picking a value $r_{i}$, setting $a_{i} := g_{i}'(r_{i})$ and verifying the formula recursively.
\end{enumerate}

Initially we have $a_{0} = 0$. The process continues until all variables have been randomly instantiated, at which point we can check the final function directly by obtaining the two values of $A$ at the randomly generated points described by the instantiated variables and computing the value. By choosing $\mathcal{F}$ such that $\Card{\mathcal{F}}> \frac{md}{\varepsilon}$, the probability of accepting at some point over the $m$ rounds is $\varepsilon$.

The function checking the weight of the satisfying assignment can be checked using the same protocol.

As $\log\Card{\mathcal{F}} \in O(\log m)$, this protocol uses $O(m\log m)$ proof bits to obtain the polynomial coefficients and $O(m\log m)$ random bits in instantiating the function.

\end{proof}

\begin{corollary}\label{cor:w[1]-pcp}
For every parameterized problem $\Pi \in \W[1]$ there exists a function $f: \mathbb{N} \rightarrow \mathbb{N}$ such that $\Pi \in$ \ppcp{(f(k)+\log n) \log(f(k)+\log n)}{(f(k)+\log n) \log(f(k)+\log n)} and hence $\W[1] \subseteq$ \ppcp{(f(k)+\log n) \log(f(k)+\log n)}{(f(k)+\log n) \log(f(k)+\log n)} where $n$ is the size of the instance and $k$ is the parameter.
\end{corollary}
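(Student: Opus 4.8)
The plan is to combine Theorem~\ref{thm:wsat-pcp} with the identity $\W[1] = [\WSATTN]^{\FPT{}}$, observing that membership in a class \ppcp{r}{p} is preserved under fpt reductions once one tracks how $r$ and $p$ scale with the size blow-up of the reduction.

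First I would fix $\Pi \in \W[1]$ and an fpt reduction $R$ from $\Pi$ to $\WSATTN$. By definition there are computable functions $g$, $h$ and a constant $c$ such that, writing $(\phi, k') := R(x,k)$, we have $k' \leq g(k)$, $R$ is computable in time $h(k)\cdot\Card{x}^{O(1)}$, and consequently $\Card{\phi} \leq h(k)\cdot\Card{x}^{c}$ (folding the implicit polynomial and constants into $h$ and $c$). Since $\var(\phi), \cl(\phi) \leq \Card{\phi}$, picking the least $m$ with $2^{m} \geq h(k)\cdot\Card{x}^{c}$ yields $m \leq \lceil\log h(k)\rceil + c\log\Card{x} + 1$. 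Setting $f(k) := \lceil\log h(k)\rceil + 1$ gives $m \leq f(k) + c\log n$, hence $m = O(f(k)+\log n)$ and $m\log m = O\big((f(k)+\log n)\log(f(k)+\log n)\big)$, where the second estimate uses $\log(f(k)+c\log n) \leq \log c + \log(f(k)+\log n)$.

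Next I would assemble the verifier for $\Pi$: on input $(x,k)$ it first computes $(\phi,k') = R(x,k)$ and then runs the $(m\log m, m\log m)$-restricted verifier $V_{0}$ of Theorem~\ref{thm:wsat-pcp} on $(\phi,k')$, relaying oracle queries to the same proof string $\sigma$. Completeness and soundness transfer verbatim from the equivalence $x \in \Pi \Leftrightarrow (\phi,k') \in \WSATTN$ together with the guarantees of $V_{0}$: if $(x,k)$ is a \Yes-instance then some $\sigma$ makes $V_{0}$ accept with probability $1$, and if it is a \No-instance then every $\sigma$ is rejected with high probability. The random and proof bits consumed are precisely those of $V_{0}$ on $(\phi,k')$, i.e.\ $O(m\log m) = O\big((f(k)+\log n)\log(f(k)+\log n)\big)$; as these bounds depend only on $\Card{x}$ and $k$ they are of the form required by the definition of an $(r,p)$-restricted $p$-PCP. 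For the time bound, computing $R$ is fpt and then $V_{0}$ runs in time polynomial in $\Card{\phi} = h(k)\cdot\Card{x}^{O(1)}$ --- the costliest step being the $2^{m}$-time evaluation of the multilinear extensions, and $2^{m} \leq 2\cdot h(k)\cdot\Card{x}^{c}$ --- so the composite verifier is probabilistic \FPT-time. This gives $\Pi \in$ \ppcp{(f(k)+\log n)\log(f(k)+\log n)}{(f(k)+\log n)\log(f(k)+\log n)}, and since $\Pi$ was arbitrary the stated inclusion of $\W[1]$ follows.

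The step I expect to demand the most care is re-expressing all of $V_{0}$'s resources in terms of the \emph{original} parameters $(\Card{x},k)$ rather than $(\Card{\phi},k')$: confirming that the $2^{m}$ factor hidden inside $V_{0}$ stays within $f(k)\cdot\Card{x}^{O(1)}$, and massaging the constant factors so that the budget is literally $(f(k)+\log n)\log(f(k)+\log n)$ rather than $(f(k)+c\log n)\log(f(k)+c\log n)$. It is also worth noting explicitly that no part of $V_{0}$'s bit budget depends on $k'$ except through $m$ (the value $k'$ enters only as the initial expected value of the weight-checking sub-protocol), so the function $g$ never appears in the final bounds.
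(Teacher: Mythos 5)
Your proposal is correct and follows essentially the same route as the paper: invoke the $\W[1]$-completeness of \WSATTN{}, bound the variables and clauses of the reduced instance by $f(k)\cdot n^{O(1)}$ so that $m = O(f'(k)+\log n)$ for a suitable computable $f'$, and then apply Theorem~\ref{thm:wsat-pcp}. The paper's proof is just a terser version of this; your additional bookkeeping (composing the verifier with the reduction, re-expressing the resource bounds in terms of $(n,k)$, and noting that $k'$ only enters via the weight check) is exactly what the paper leaves implicit.
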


\begin{proof}
As \WSATTN{} is $\W[1]$-complete, every problem in $\W[1]$ can be reduced to an instance of \WSATTN{} in time bounded by $f(k)n^{O(1)}$ for some computable function $f$. Hence the instance of \WSATTN{} produced by the reduced has at most $f(k)n^{O(1)}$ variables and $f(k)n^{O(1)}$ clauses.
\end{proof}

\subsection{Unbounded Clauses and $\W[2]$}

The class $\Gamma^{+}_{2,1}$ of propositional formul\ae{} can be more naturally thought of as the class of all propositional CNF formul\ae{}. The protocol given for $\W[1]$ in the previous section, although defined for $\Gamma^{-}_{1,2}$, does not depend on the clause length --- the bounds on the number of bits used may change, but the clause length is not fundamental to the structure, unlike say, that the formula is in CNF as this restriction ensures that the arithmetization is multilinear.

\begin{theorem}
\sloppypar Let $(\phi, k)$ be an instance of \WSAT{$\Gamma^{+}_{2,1}$} where $\max\{\var(\phi),\cl(\phi)\} \leq 2^{m}$ and $p$ is the length of the longest clause. There is an $(p\cdot m\log m, p\cdot m\log m)$-restricted probabilistic $\FPT{}$-time Turing Machine that rejects $(\phi, k)$ with high probability if $(\phi, k)$ is a \No{}-instance of \WSAT{$\Gamma^{+}_{2,1}$}. That is, \WSAT{$\Gamma^{+}_{2,1}$} $\in$ \ppcp{p \cdot m\log m}{p \cdot m \log m}.
\end{theorem}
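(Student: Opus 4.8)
The plan is to rerun, essentially verbatim, the arithmetization-plus-sum-check protocol of Theorem~\ref{thm:wsat-pcp}, making only the two changes forced by the difference between $\Gamma^{-}_{1,2}$ and $\Gamma^{+}_{2,1}$: clauses now have up to $p$ literals rather than exactly two, and those literals are positive rather than negative. As before, let $m$ be least with $2^{m}\geq\max\{\var(\phi),\cl(\phi)\}$, index every clause and every variable by a string in $\mathbb{B}^{m}$, and for $i\in\{1,\dots,p\}$ let $C_{c,i}\colon\mathbb{B}^{m}\to\mathbb{B}$ be the (multilinear) indicator that a given string is the $i^{\text{th}}$ variable of clause $c$; a clause with fewer than $p$ literals is padded to length exactly $p$ by repeating its last literal, which is harmless because repetition does not change whether a product of factors vanishes.

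The arithmetization changes shape because the clauses are positive: a clause with variables $v_{1},\dots,v_{p}$ is satisfied by $A$ precisely when $\prod_{i=1}^{p}(1-A(v_{i}))=0$, so I would set
\[
SC(A,y)=\sum_{x_{1},\dots,x_{p}\in\mathbb{B}^{m}}\ \prod_{i=1}^{p}C_{y,i}(x_{i})\,\bigl(1-A(x_{i})\bigr),
\]
which factors as $\prod_{i=1}^{p}\bigl(1-A(v_{i}(y))\bigr)$ and is therefore $0$ iff clause $y$ is satisfied. The whole formula, together with the weight constraint, is then packaged exactly as in the $\W[1]$ proof: put $S(A)=\sum_{z\in\mathbb{B}^{m}}SC(A,z)\prod_{j\in[1,m]}r_{j}^{z_{j}}$ with the $r_{j}$ chosen at random (so that, with high probability, $S(A)=0$ iff every clause is satisfied, by the argument of Feige \emph{et al.}), and separately demand $\sum_{z\in\mathbb{B}^{m}}A(z)=k$. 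Using the cited proposition, the multilinear extension of each $C$ is computable in time $2^{O(m)}$, and --- once $A$ has been forced to be (close to) multilinear by the Feige \emph{et al.}\ test, at a cost of $O(m\log m)$ bits --- both $S$ and the weight polynomial are genuine multinomials; the only difference from Theorem~\ref{thm:wsat-pcp} is that the $p$-fold product raises the degree from $O(1)$ to $O(p)$.

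I would then apply the Lund \emph{et al.}\ sum-check to $S(A)$ and to the weight polynomial without modification: instantiate one bit per round, read the claimed univariate polynomial from the proof, verify $g_{i}'(0)+g_{i}'(1)=a_{i-1}$, and fold in a random $r_{i}\in\mathcal{F}$. The polynomials handed over in the rounds that instantiate the bits of the clause index $z$ now have degree $O(p)$ (hence $O(p)$ coefficients each), while those in the rounds instantiating the bits of the $x_{i}$ remain of degree $O(1)$; taking $\Card{\mathcal{F}}$ polynomial in $m$ keeps the accumulated error below any fixed $\varepsilon$ and keeps $\log\Card{\mathcal{F}}=O(\log m)$ as in Theorem~\ref{thm:wsat-pcp}. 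Counting: $m$ clause-index rounds with $O(p)$ coefficients, $pm$ rounds for the $x_{i}$ with $O(1)$ coefficients, plus the $O(m\log m)$ bits of the multilinearity test and a constant number of point values of $A$, amounts to $O(p\cdot m\log m)$ proof bits and, symmetrically, $O(p\cdot m\log m)$ random bits --- exactly a factor $p$ worse than the bound of Theorem~\ref{thm:wsat-pcp}.

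The step I expect to need the most care is verifying that the positive, unbounded-clause arithmetization really stays ``low degree'': that $SC(A,\cdot)$ vanishes on exactly the satisfied clauses, that the padding of short clauses is genuinely inert, and above all that replacing the two-fold product of the $\W[1]$ proof by a $p$-fold product inflates the degree only to $O(p)$ and not beyond --- in particular that the multilinearity test still pins $A$ down tightly enough that each factor contributes degree at most one in each of its variables. Everything else, including the randomized reduction of ``all clauses satisfied'' to ``$S(A)=0$'' and the soundness analysis of the sum-check, is inherited verbatim from Theorem~\ref{thm:wsat-pcp} and the protocols of Lund \emph{et al.}\ and Feige \emph{et al.}
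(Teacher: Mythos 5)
Your proposal is correct and follows essentially the same route as the paper: the same modified arithmetization $SC(A,y)=\sum_{x_{1},\ldots,x_{p}}\prod_{i}C_{y,i}(x_{i})(1-A(x_{i}))$ with the $C_{c,i}$ extended to $p$ positions, followed by the same Lund/Feige-style sum-check now running for $p\cdot m$ rather than $2\cdot m$ rounds, yielding the factor-$p$ blow-up in proof and random bits. In fact you spell out details the paper leaves implicit (the padding of short clauses and the degree-$O(p)$ polynomials in the clause-index rounds), and your accounting still lands on the claimed $O(p\cdot m\log m)$ bounds.
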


\begin{proof}
We can modify the $SC$ function to cope with greater clause length and positive rather than negative literals:
\[
SC(A,y) = \sum_{x_{1},\ldots,x_{p}\in \mathbb{B}^{m}} \prod_{i \in \{1,p\}} C_{y,i}(x_{i})(1-A(x_{i}))
\]
The family of functions $C_{c,i}$ is also extended in the obvious way.

Then the protocol continues for $p\cdot m$ rounds rather than the $2\cdot m$ as for the $\Gamma^{-}_{1,2}$ case. We then need a factor of $p$ extra random bits, and we require $p$ values of the satisfying assignment $A$ for the final evaluation.
\end{proof}

\begin{corollary}
For every parameterized problem $\Pi \in \W[2]$ there exists a function $f: \mathbb{N} \rightarrow \mathbb{N}$ such that $\Pi \in$ \ppcp{p\cdot(f(k)+\log n) \log(f(k)+\log n)}{p\cdot(f(k)+\log n) \log(f(k)+\log n)} and hence $\W[2] \subseteq$ \ppcp{p\cdot(f(k)+\log n) \log(f(k)+\log n)}{p\cdot(f(k)+\log n) \log(f(k)+\log n)} where $n$ is the size of the instance and $k$ is the parameter and $p$ is the length of the longest clause in the equivalent \WSAT{$\Gamma^{+}_{2,1}$} instance.
\end{corollary}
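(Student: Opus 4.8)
The plan is to follow the proof of Corollary~\ref{cor:w[1]-pcp} essentially verbatim, replacing Theorem~\ref{thm:wsat-pcp} by the theorem immediately above as the underlying machine. The only genuinely new feature is already accounted for in that theorem: unbounded clause length is handled at the cost of a single multiplicative factor of $p$ in both resource bounds, where $p$ is the length of the longest clause of the formula that is actually presented to the verifier.

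First I would invoke that \WSAT{$\Gamma^{+}_{2,1}$} is $\W[2]$-complete. Thus for any $\Pi \in \W[2]$ there is an fpt reduction $R$ from $\Pi$ to \WSAT{$\Gamma^{+}_{2,1}$}, computable in time $f(k)\cdot n^{O(1)}$ for some computable $f$, with the new parameter bounded by $g(k)$ for some computable $g$. Since $R$ respects this time bound, the formula $\phi = R(x)$ has at most $f(k)\cdot n^{O(1)}$ variables and at most $f(k)\cdot n^{O(1)}$ clauses. I would then choose the smallest $m$ with $2^{m} \geq \max\{\var(\phi),\cl(\phi)\}$; from the size bound, $m = O(\log(f(k)\cdot n^{O(1)})) = O(\log f(k) + \log n)$, and since $\log f(k) \leq f(k)$ we may absorb all the function names into a single computable function and simply write $m = O(f(k)+\log n)$. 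Applying the preceding theorem to $(\phi, g(k))$ yields a $(p\cdot m\log m,\; p\cdot m\log m)$-restricted probabilistic \FPT{}-time verifier for \WSAT{$\Gamma^{+}_{2,1}$}, where $p$ is the longest clause of $\phi$; substituting the bound on $m$ turns $p\cdot m\log m$ into $p\cdot(f(k)+\log n)\log(f(k)+\log n)$ up to the implied constant.

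Second, I would compose the two pieces into a single $p$-PCP for $\Pi$: on input $(x,k)$ the verifier first computes $\phi = R(x)$ in fpt time and then runs the verifier above. Because $R$ is a many-one reduction, $x \in \Pi$ if and only if $(\phi, g(k))$ is a \Yes{}-instance of \WSAT{$\Gamma^{+}_{2,1}$}, so perfect completeness and soundness $\tfrac12$ transfer unchanged; the total running time is still of the form $f'(k)\cdot n^{O(1)}$, and the random-bit and proof-bit usage are exactly those of the inner verifier. This gives $\Pi \in$ \ppcp{p\cdot(f(k)+\log n)\log(f(k)+\log n)}{p\cdot(f(k)+\log n)\log(f(k)+\log n)}, and as $\Pi$ was arbitrary the stated class inclusion follows.

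I do not expect a substantive obstacle here; the only point requiring care is that $p$ cannot be bounded in terms of $k$ and $n$ alone, since the reduction $R$ may produce arbitrarily long clauses — this is precisely why the statement must keep $p$ as a parameter of the equivalent \WSAT{$\Gamma^{+}_{2,1}$} instance rather than folding it into a function of the original input. One should also verify that the clause-length dependence in the preceding theorem is exactly linear in $p$ (it is, by its proof: $p$ nested sums over $\mathbb{B}^{m}$ give $p\cdot m$ rounds and $p$ final evaluations of $A$), so that no hidden extra factor of $p$ leaks into the $m\log m$ term.
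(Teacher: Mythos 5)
Your proposal is correct and follows essentially the same route the paper takes (the paper leaves this corollary without an explicit proof, intending exactly the argument of Corollary~2 for $\W[1]$: $\W[2]$-completeness of \WSAT{$\Gamma^{+}_{2,1}$} bounds the formula size by $f(k)n^{O(1)}$, hence $m = O(f(k)+\log n)$, and the preceding theorem supplies the $(p\cdot m\log m, p\cdot m\log m)$-restricted verifier). Your closing remark that $p$ cannot be folded into a function of $k$ and $n$ matches the paper's own caveat following the corollary.
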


The catch with this of course is that $p$ may, in principle, be as long as the formula and hence $O(f(k)n^{O(1)})$, in which case we do no better (actually, clearly worse) than the trivial \emph{p}-PCP guaranteed by the fact that $\W[2]\subseteq \text{\pNP{}}$.

\subsection{Extension to Bounded Parameterized Classes}


As \WSATTN{} is complete for both $\EXPW[1]$~\cite{Weyer04} and $\EW[1]$~\cite{FlumGroheWeyer06}, we can easily adapt the $\W[1]$ result. We omit the formal particulars of the restriction on the running time and reduction structures denoting them simply by prepending the bound to the nomenclature.

\begin{corollary}
$\EXPW[1] \subseteq$ $2^{k^{O(1)}}$-\ppcp{(2^{k^{O(1)}}+\log n) \log(2^{k^{O(1)}}+\log n)}{(2^{k^{O(1)}}+\log n) \log(2^{k^{O(1)}}+\log n)} where $n$ is the size of the instance and $k$ is the parameter.
\end{corollary}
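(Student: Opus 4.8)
The plan is to obtain this corollary as a direct transfer of Theorem~\ref{thm:wsat-pcp} along the completeness of \WSATTN{} for $\EXPW[1]$, exactly as Corollary~\ref{cor:w[1]-pcp} was obtained from Theorem~\ref{thm:wsat-pcp} for $\W[1]$, but now tracking the coarser resource bounds imposed by \EXPT{}-reductions. First I would recall that \WSATTN{} is $\EXPW[1]$-complete (Weyer~\cite{Weyer04}), so that every $\Pi \in \EXPW[1]$ fpt-reduces --- in the $\EXPT{}$-sense, i.e.\ in time $2^{\kappa(x)^{O(1)}}\cdot|x|^{O(1)}$ --- to an instance $(\phi,k)$ of \WSATTN{}. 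The crucial accounting step is to bound the size of that instance: a reduction running in time $2^{k^{O(1)}}\cdot n^{O(1)}$ can produce a formula $\phi$ with at most $2^{k^{O(1)}}\cdot n^{O(1)}$ variables and clauses, so the parameter $m$ of Theorem~\ref{thm:wsat-pcp} (the smallest integer with $2^{m}\geq\max\{\var(\phi),\cl(\phi)\}$) satisfies $m \in O(k^{O(1)} + \log n)$, which I would simplify to $m = 2^{k^{O(1)}} + \log n$ up to the usual slack absorbed by the $O(1)$ exponents (here the $2^{k^{O(1)}}$ term dominates the $k^{O(1)}$ term, so this rewriting is harmless and matches the corollary statement).

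Next I would feed this bound into the conclusion of Theorem~\ref{thm:wsat-pcp}: applying the $(m\log m, m\log m)$-restricted $p$-PCP to $(\phi,k)$ gives a verifier using $O(m\log m)$ random and proof bits with $m$ as above, i.e.\ $(2^{k^{O(1)}}+\log n)\log(2^{k^{O(1)}}+\log n)$ of each. The only remaining point is the running time of the composed verifier: the \EXPT{}-reduction contributes $2^{k^{O(1)}}\cdot n^{O(1)}$, and the verifier of Theorem~\ref{thm:wsat-pcp} runs in $\FPT{}$-time in the size of $(\phi,k)$, hence in $2^{k^{O(1)}}\cdot n^{O(1)}$ overall; this is precisely why the corollary is phrased as membership in the \EXPT{}-flavoured class, written $2^{k^{O(1)}}$-\ppcp{\cdots}{\cdots}. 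Completeness and soundness are inherited verbatim: by the reduction, $x$ is a \Yes{}-instance of $\Pi$ iff $(\phi,k)$ is a \Yes{}-instance of \WSATTN{}, and Theorem~\ref{thm:wsat-pcp} accepts \Yes{}-instances with probability $1$ and rejects \No{}-instances with high probability, so choosing $\mathcal{F}$ large enough (as in that proof) drives the acceptance probability on \No{}-instances below $\tfrac12$.

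I do not expect a genuine obstacle here, since all the hard work is already done in Theorem~\ref{thm:wsat-pcp}; the only thing requiring care is the bookkeeping of which bound goes where, and in particular making sure that the time bound of the reduction, the size bound on $\phi$, and the resource bounds of the verifier all compose to the single expression $2^{k^{O(1)}} + \log n$ (inside the $m\log m$ form) without hiding a super-exponential blow-up. The one subtlety worth a sentence in the final write-up is the remark --- already flagged earlier for $\W[2]$ --- that this is only meaningful because $2^{k^{O(1)}}+\log n$ is still ``small'' relative to $n$ in the regimes of interest; unlike the $\W[2]$ case there is no stray clause-length parameter $p$, since \WSATTN{} keeps clauses of length $2$, so the bound is clean. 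I would state the proof in three short paragraphs mirroring Corollary~\ref{cor:w[1]-pcp}: invoke completeness, bound the instance size, invoke Theorem~\ref{thm:wsat-pcp}, and note the \EXPT{} time bound on the composition.
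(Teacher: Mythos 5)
Your proposal is correct and is essentially the argument the paper intends: the paper omits the proof, remarking only that \WSATTN{} is $\EXPW[1]$-complete~\cite{Weyer04} and that the $\W[1]$ corollary adapts directly, which is exactly your scheme of completeness, bounding the reduced instance by $2^{k^{O(1)}}\cdot n^{O(1)}$ so that $m \in O(k^{O(1)}+\log n)$, invoking Theorem~\ref{thm:wsat-pcp}, and accounting for the $2^{k^{O(1)}}\cdot n^{O(1)}$ running time of the composed verifier. Your observation that the derivation actually yields the tighter bound $(k^{O(1)}+\log n)\log(k^{O(1)}+\log n)$, which you then coarsen to match the stated $2^{k^{O(1)}}+\log n$ form, is a harmless (and correct) remark on the slack in the corollary as stated.
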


\begin{corollary}
$\EW[1] \subseteq$ $2^{O(k)}$-\ppcp{(2^{O(k)}+\log n) \log(2^{O(k)}+\log n)}{(2^{O(k)}+\log n) \log(2^{O(k)}+\log n)} where $n$ is the size of the instance and $k$ is the parameter.
\end{corollary}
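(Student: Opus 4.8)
The plan is to obtain this exactly as Corollary~\ref{cor:w[1]-pcp} is obtained from Theorem~\ref{thm:wsat-pcp}, the only new ingredient being that we must keep track of the \emph{bounded} --- not merely computable --- dependence on $k$ that \EPT{}-reductions permit. The single imported fact is that \WSATTN{} is complete for $\EW[1]$ under \EPT{}-reductions~\cite{FlumGroheWeyer06}. Hence for $\Pi \in \EW[1]$ there is a reduction $R$ sending an instance $(x,k)$ of $\Pi$ to an instance $(\phi,k')$ of \WSATTN{} with: (i) $(x,k)$ a \Yes{}-instance iff $(\phi,k')$ is; (ii) $R$ computable in \EPT{} time, i.e.\ in time $2^{O(k)}\cdot n^{O(1)}$; and (iii) $k'$ bounded by a computable function of $k$. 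By (ii) the whole instance $(\phi,k')$, and in particular $\max\{\var(\phi),\cl(\phi)\}$, has length at most $2^{O(k)}\cdot n^{O(1)}$.

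The verifier for $\Pi$ then runs in two stages: on $(x,k)$ it first computes the \WSATTN{}-instance $(\phi,k')$ that $R$ produces, spending $2^{O(k)}\cdot n^{O(1)}$ time and no randomness and no proof bits, and then it simulates the verifier of Theorem~\ref{thm:wsat-pcp} on $(\phi,k')$. The time budget is respected because that verifier runs in time polynomial in the length of its input --- it is called ``\FPT{}-time'' in Theorem~\ref{thm:wsat-pcp} only because \WSATTN{} is parameterized, but the protocol (arithmetisation, the multilinearity test, the $O(m)$ rounds of low-degree polynomial-identity checking over a field of size polynomial in $m$, the two final queries to $A$) costs only $\Card{\phi}^{O(1)}$, with no hidden blow-up in the weight $k'$. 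Since by (ii) that input has length $2^{O(k)}\cdot n^{O(1)}$, the composed machine runs in $2^{O(k)}\cdot n^{O(1)}$ time, which is exactly the restriction the ``$2^{O(k)}$'' prefix records. Completeness and soundness carry over unchanged: by (i) a \Yes{}-instance of $\Pi$ becomes a \Yes{}-instance of \WSATTN{}, which some proof makes the verifier accept with probability $1$, and a \No{}-instance becomes a \No{}-instance, which no proof makes it accept with probability above $\tfrac12$.

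It remains to read off the randomness and query complexities. Put $m := \lceil\log\max\{\var(\phi),\cl(\phi)\}\rceil$; by the length estimate above, $m = O(k+\log n)$. Theorem~\ref{thm:wsat-pcp} then guarantees that the second stage uses $O(m\log m)$ random bits and reads $O(m\log m)$ bits of the proof, and since $k+\log n \le 2^{O(k)}+\log n$ this is in particular $O\!\big((2^{O(k)}+\log n)\log(2^{O(k)}+\log n)\big)$ on both counts, which is the claimed containment (the argument in fact yields the sharper $O\!\big((k+\log n)\log(k+\log n)\big)$). The preceding corollary for $\EXPW[1]$ is proved identically, replacing \EPT{}-reductions by \EXPT{}-reductions~\cite{Weyer04} and $2^{O(k)}$ by $2^{k^{O(1)}}$ throughout.

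The one step that warrants care --- and the only genuinely non-mechanical point --- is the composition: one has to check that chaining the \EPT{}-reduction into the \WSATTN{} verifier keeps the whole machine within the \EPT{} time bound, which rests on the verifier of Theorem~\ref{thm:wsat-pcp} being honestly polynomial-time in the length of its instance rather than carrying an uncontrolled dependence on the reduced parameter $k'$, and then on re-expressing the instance-dependent resource quantity $m$ of that theorem in terms of the original $n$ and $k$. Once the $2^{O(k)}\cdot n^{O(1)}$ size blow-up of $R$ is noted, both are just arithmetic on logarithms, exactly as in the proof of Corollary~\ref{cor:w[1]-pcp}.
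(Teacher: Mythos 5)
Your proposal is correct and follows essentially the route the paper intends: the paper omits a written proof, justifying the corollary solely by the $\EW[1]$-completeness of \WSATTN{} under \EPT{}-reductions and the remark that the $\W[1]$ argument of Corollary~\ref{cor:w[1]-pcp} adapts directly, which is precisely the composition you spell out. Your additional observations --- that the verifier of Theorem~\ref{thm:wsat-pcp} is polynomial in $\Card{\phi}$ so the composed machine stays within $2^{O(k)}\cdot n^{O(1)}$ time, and that $m=O(k+\log n)$ in fact gives the sharper bound $O((k+\log n)\log(k+\log n))$, of which the stated $ (2^{O(k)}+\log n)\log(2^{O(k)}+\log n)$ is a weakening --- are accurate and consistent with the paper's statement.
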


As \WSATTN{} is not $\SW[1]$-complete, we need to adjust the formula $CS$ used in Theorem~\ref{thm:pcp} as we can no longer assume that all variables are negated. Fortunately we can simply use the formula of Feige \emph{et al.}~\cite{FeigeGLSS96} more directly (adjusted for 2-CNF, rather than 3-CNF). Recall that $2^{m} \geq \max\{\var(\phi),\cl(\phi)\}$ (of course we are really just interested in taking a power of two so that the logarithms work neatly). As $\var(\phi)=k'$ is the parameter we know that $m \leq \log(4k'^{2})$. Given that $k$ is the parameter of the initial problem and $n$ is the size, the reduction scheme that closes the ${\normalfont S}$-hierarchy gives $k' = g(l)(k+\log n)$ for some $\SUBEPT{}$-time computable function $g$ over $\mathbb{N}$. 

\begin{corollary}\sloppypar
$\SW[1] \subseteq$ $2^{o^{eff}(k)}$-\ppcp{\log (g'(l)(k + \log n)^{2}) \log \log (g'(l)(k + \log n)^{2})}{\log (g'(l)(k + \log n)^{2}) \log \log( g'(l)(k + \log n)^{2})} where $n$ is the size of the instance, $k$ is the parameter and $g'$ is a $\SUBEPT{}$-time computable function over $\mathbb{N}$.
\end{corollary}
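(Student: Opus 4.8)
The plan is to carry out the argument sketched just before the statement, organized around the three ingredients that differ from Corollary~\ref{cor:w[1]-pcp}. First, the starting point: since $\SW[1]$ is the serf-closure of \textsc{$s$-$var$-WSAT($\Gamma_{1,2}$)}, given $\Pi\in\SW[1]$ and an instance $(x,k)$ with $\Card{x}=n$ one applies the serf reduction, obtaining in time $2^{o^{eff}(k)}\cdot n^{O(1)}$ a $2$-CNF formula $\phi$ with arbitrary (not necessarily negated) literals together with a prescribed target weight, parameterized by $k':=\var(\phi)$, where the reduction scheme that closes the $S$-hierarchy gives $k' = g(l)(k+\log n)$ for a $\SUBEPT{}$-time computable function $g$ over $\mathbb{N}$.

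Second, the arithmetization of Theorem~\ref{thm:wsat-pcp} has to be freed from the ``all literals negated'' hypothesis, which enters the protocol only through the clause function $SC$: there a clause $\neg v_1\vee\neg v_2$ is falsified precisely when $A(v_1)=A(v_2)=1$. For a mixed $2$-clause one instead substitutes $A(x_i)$ at a positive occurrence and $1-A(x_i)$ at a negated one, which is exactly the Feige \emph{et al.}~\cite{FeigeGLSS96} arithmetization specialized from $3$-CNF to $2$-CNF. Since $1-A$ is multilinear whenever $A$ is, the modified $SC$, the global function $S(A)$ together with its random $\prod_i r_i^{z_i}$ correction terms, the weight-check function $\sum_{z\in\mathbb{B}^m}A(z)$, the functions $C_{y,i}$, and the Feige \emph{et al.}\ multilinearity test on $A$ all carry over unchanged; hence once $A$ is (with high probability) multilinear, $S$ is again a bounded-degree multinomial and the Lund \emph{et al.}\ sum-check runs verbatim over $2m$ rounds, over a field $\mathcal{F}$ with $\log\Card{\mathcal{F}}=O(\log m)$, with the same completeness and soundness.

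Third, the parameter bookkeeping. Discarding duplicate clauses, one may assume $\cl(\phi)\le 4(k')^2$, so the least $m$ with $2^m\ge\max\{\var(\phi),\cl(\phi)\}$ satisfies $m\le\log\bigl(4(k')^2\bigr)=\log\bigl(g'(l)(k+\log n)^2\bigr)$ with $g'(l):=4\,g(l)^2$, again $\SUBEPT{}$-time computable over $\mathbb{N}$. Feeding this $m$ into the $(m\log m,m\log m)$ bound of Theorem~\ref{thm:wsat-pcp} yields exactly the asserted $\log(g'(l)(k+\log n)^2)\cdot\log\log(g'(l)(k+\log n)^2)$ random and proof bits. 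For the time bound: the serf reduction costs $2^{o^{eff}(k)}n^{O(1)}$, and the verifier's remaining work is polynomial in $\Card{\phi}$ and in $2^{O(m)}$, where $\Card{\phi}\le 2^{o^{eff}(k)}n^{O(1)}$ and $2^{O(m)}$ is polynomial in $k'$ and hence in $k+\log n$, so bounded by $h(k)\cdot n^{O(1)}$; the verifier therefore runs in $2^{o^{eff}(k)}\cdot n^{O(1)}$ time, i.e.\ it is the desired $2^{o^{eff}(k)}$-restricted $p$-PCP, with completeness and soundness inherited from Theorem~\ref{thm:wsat-pcp}.

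The most delicate step is expected to be the composition with the serf reduction: a serf reduction is a \emph{family} of reductions indexed by an error parameter (and in general a Turing, not a many-one, reduction), so one must check that composing a suitable member of this family with the PCP verifier still produces a single legitimate $p$-PCP verifier that respects the $2^{o^{eff}(k)}$ time bound and the one-sided error. The other point to handle with care --- that the $\log n$ buried inside $k'$ never reaches an exponent destructively --- is real but mild: precisely because $m$ is only \emph{logarithmic} in $k'$, the factor $2^{O(m)}$ stays polynomial in $k+\log n$, hence polynomial in $n$ up to a function of $k$, and this estimate just needs to be written out rather than assumed.
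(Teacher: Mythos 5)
Your route is the same as the paper's: apply the serf reduction that closes the $S$-hierarchy (so that the resulting $s$-$var$-\WSAT{$\Gamma_{1,2}$} instance has parameter $k'=g(l)(k+\log n)$), note that a $2$-CNF on $k'$ variables has at most $4k'^{2}$ clauses so that the $m$ of Theorem~\ref{thm:wsat-pcp} satisfies $m\leq\log(4k'^{2})=\log\bigl(g'(l)(k+\log n)^{2}\bigr)$, replace the all-negated arithmetization by Feige \emph{et al.}'s adjusted from $3$-CNF to $2$-CNF, and feed this $m$ into the $(m\log m,m\log m)$ bound. This is exactly the paper's (very terse) justification; your extra care about the running-time accounting and about serf reductions being families of (Turing) reductions addresses points the paper simply glosses over.

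The one step that fails as written is your modified clause function. In the paper's $SC$ each factor is a \emph{falsification} indicator for its literal: in the all-negated case $\neg v$ is false exactly when $A(v)=1$, whence the factor $A(x_{i})$ and the property that $SC(A,y)=0$ iff $A$ satisfies clause $y$; correspondingly, in the \W[2] argument the all-positive clauses receive factors $(1-A(x_{i}))$. For mixed literals you therefore need $1-A(x_{i})$ at a \emph{positive} occurrence and $A(x_{i})$ at a \emph{negated} one --- the opposite of what you wrote. With your assignment the product equals $1$ precisely when every literal of the clause is true, so $SC(A,y)=0$ asserts only that some literal of $y$ is false; a satisfying assignment that happens to make both literals of some clause true would then be rejected, and an assignment leaving a clause with both literals false would still pass, so both completeness and soundness of the sum-check collapse. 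The repair is a pure swap of the two factors, after which everything else you say (multilinearity of $1-A$, the constant degree bound, the weight check, and the $2^{o^{eff}(k)}\cdot n^{O(1)}$ time bound) goes through as in the paper.
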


Then from the miniaturization isomorphism we get:

\begin{corollary}\sloppypar
$\M[1] \subseteq$ \ppcp{\log (f(\frac{k}{\log n})n^{O(1)}) \log \log (f(\frac{k}{\log n})n^{O(1)})}{\log (f(\frac{k}{\log n})n^{O(1)}) \log \log (f(\frac{k}{\log n})n^{O(1)})} where $n$ is the size of the instance and $\frac{k}{\log n}$ is the parameter.
\end{corollary}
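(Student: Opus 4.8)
The plan is to derive the \M[1] bound from the \SW[1] corollary by transporting it across the miniaturization isomorphism, exactly as Corollary~\ref{cor:w[1]-pcp} derives the \W[1] bound from Theorem~\ref{thm:wsat-pcp}. Recall that a parameterized problem is \M[1]-complete iff it is fpt-equivalent to \textsc{Mini-}$(\Pi',\kappa')$ for some \SW[1]-complete $(\Pi',\kappa')$, that \M[1] is closed under fpt reductions, and that via $\SW[1] = \left[\text{$s$-$var$-}\WSAT{$\Gamma_{1,2}$}\right]^{serf}$ we may choose $(\Pi',\kappa')$ so that every instance $x$ of $\Pi'$ serf-reduces to a \textsc{2-CNF} formula $\psi$ with $\var(\psi) \le g(l)(\kappa'(x)+\log\Card{x})$ for some \SUBEPT{}-time computable $g$. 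It therefore suffices to build a \emph{p}-PCP for \textsc{Mini-}$(\Pi',\kappa')$ and then propagate the bit bounds through fpt reductions, folding the reduction's blow-up functions into the eventual $f$ and the $n^{O(1)}$ slack, just as in the proof of Corollary~\ref{cor:w[1]-pcp}.

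So fix a \textsc{Mini-}$(\Pi',\kappa')$ instance $(x,m)$: $\Card{x}\le m$, $m$ given in unary, parameter $\hat k=\lceil\kappa'(x)/\log m\rceil$. Write $n$ for $\Card{(x,m)}$ and note $n=\Theta(m)$, so $\log m=\Theta(\log n)$. The verifier first runs the serf-reduction on $x$ to get the \textsc{2-CNF} formula $\psi$; since $\psi$ is in \textsc{2-CNF}, $\cl(\psi)\le O(\var(\psi)^2)$, hence $\max\{\var(\psi),\cl(\psi)\}\le 2^M$ with $M=O(\log\var(\psi))=O(\log(g(l)(\kappa'(x)+\log\Card{x})))$. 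Substituting $\kappa'(x)\le\hat k\log m$ and $\Card{x}\le m$, and absorbing $g(l)$, the clause-count squaring, and a $\log m$ factor into a single computable $f$ together with a polynomial slack, gives $M=O(\log(f(\hat k)\,n^{O(1)}))$ (here $n^{O(1)}$ is wasteful --- $(\log n)^{O(1)}$ would do --- but it keeps the statement clean). The verifier then runs the arithmetization protocol of Theorem~\ref{thm:wsat-pcp} on $\psi$, in the form used for the \SW[1] corollary so that mixed literals and the weight condition are handled; this is an $(M\log M,M\log M)$-restricted check, so on substituting $M$ it uses $O(\log(f(\hat k)n^{O(1)})\log\log(f(\hat k)n^{O(1)}))$ random bits and equally many proof bits. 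With $\hat k=k/\log n$ in the notation of the statement, this is the advertised \emph{p}-PCP; completeness and soundness carry over verbatim from Theorem~\ref{thm:wsat-pcp}.

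The one genuinely delicate point is that the verifier must run in honest \FPT{}-time in $\hat k$, rather than merely within the $2^{o^{eff}(\kappa'(x))}\Card{x}^{O(1)}$ budget the \SW[1] corollary provides. This is precisely the ``forward'' direction of the miniaturization isomorphism applied to the running time --- a $2^{o^{eff}(\kappa')}$-slower-than-polynomial computation on $x$, with $\kappa'(x)\le\hat k\log m$ and $\Card{x}\le m$, becomes an $f(\hat k)\cdot m^{O(1)}$ computation --- and I would simply cite it (it pervades \cite{FlumGrohe06}); the serf-reduction carries the same time signature and so is absorbed too. Given the verifier in parameterized polynomial time and the bit bounds above, \textsc{Mini-}$(\Pi',\kappa')$ lies in the claimed class, and closure of \M[1] under fpt reductions extends the membership to all of \M[1]. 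Consequently the main obstacle is not conceptual but a matter of careful bookkeeping: keeping the three distinct ``parameters'' in play --- the \SW[1]-parameter $\kappa'(x)$, the miniaturization parameter $\lceil\kappa'(x)/\log m\rceil$, and the serf-reduced variable count --- consistently aligned, and verifying that the $n^{O(1)}$ term really does swallow $g(l)$, the clause-count squaring, and every stray $\log m$.
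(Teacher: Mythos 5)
Your proposal is correct and follows essentially the same route as the paper, which states this corollary without proof as an immediate consequence of the \SW[1] result and the miniaturization isomorphism: you transport the $s$-$var$-\WSAT{$\Gamma_{1,2}$} protocol to its miniaturization, translate $\kappa'(x)\leq \hat{k}\log m$ into the (deliberately wasteful) bound $M=O(\log(f(\hat{k})n^{O(1)}))$, and close off via fpt reductions and the $\M[1]$-completeness of the miniaturized problem. Your explicit treatment of the running-time transfer (SUBEPT time in $\kappa'$ becoming \FPT{}-time in $\hat{k}$) is a detail the paper leaves implicit, and it is handled correctly by citing the standard miniaturization machinery.
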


\section{Proof Checking for the $A$-Hierarchy}

Looking at the classes of the $A$-hierarchy, one may be put in mind of Shamir's~\cite{Shamir90} proof that IP=\PSPACE{} via a Lund \emph{et al.}~\cite{LundFKN92} style protocol over instances of the \textsc{Quantified Boolean Satisfiability} problem. However, the restriction of the weight of the solution poses some interesting problems. While in Shamir's case, the universal quantification is truly universal, in ours it is universal only in the ``for all subsets of size $k$'' sense, hence it is difficult to translate an instance of \textsc{AWSAT} in the same fashion --- dealing with each universally quantified variable individually becomes complicated by the fact that its possible values depend on how many of the previous variables have been set to \textsc{TRUE}, which is further complicated by the assignment of a random value out of a much larger field.

From the parameterized perspective, it is also perhaps not sensible that we ask to verify a membership proof of an \textsc{AWSAT} problem in \FPT{}-time. If we consider a certificate for such an instance (with the technical consideration that $l\geq 2$) then we must verify not only a single weight $k$ satisfying assignment, but for those variables that are universally quantified, we must verify that \emph{all} weight $k_{i}$ assignments have accompanying assignments from the existentially quantified variables following them. That is, we are in effect expected to check on the order of $n^{k}$ assignments. This is reflected in the structure of the parameterized classes --- while the $W$-hierarchy is contained in \pNP{}, apart from $\A[1]$, there is no evidence that the $A$-hierarchy is. However the $A$-hierachy is contained in \XP{}, hence we can solve these problems in time $f(k)+n^{f(k)}$ and naturally can thus check solutions within that bound.

With this in mind we suggest a slightly relaxed version of a parameterized PCP, where we make the obvious changes from \FPT{}-time to $f(k)+n^{f(k)}$. For simplicity we will denote this as a $n^{k}$-$p$-PCP.

\begin{theorem}
\sloppypar Let $(\phi, X_{1},\ldots, X_{l}, k=k_{1}+\ldots+k_{l})$ be an instance of \AWSATL{l}{$\Gamma^{-}_{1,2}$} where $\max\{\var(\phi),\cl(\phi)\} \leq 2^{m}$ with $l$ odd and $k' = k_{2}+k_{4}+\ldots+k_{l-1}$. There is an $(n^{k'}\cdot m\log m, n^{k'}\cdot m\log m)$-restricted probabilistic $(f(k)+n^{f(k)})$-time Turing Machine that rejects $(\phi, k)$ with high probability if $(\phi, X_{1},\ldots, X_{l}, k=k_{1}+\ldots+k_{l})$ is a \No{}-instance of \AWSATL{l}{$\Gamma^{-}_{1,2}$}. That is, \AWSATL{l}{$\Gamma^{-}_{1,2}$} $\in$ \ppcp{n^{k'}\cdot m\log m}{n^{k'}\cdot m\log m}.
\end{theorem}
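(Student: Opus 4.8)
The plan is to combine the arithmetization of Theorem~\ref{thm:wsat-pcp} with an explicit, verifier-driven enumeration of the universally quantified ``blocks''. Since $l$ is odd, the quantifier prefix is $\exists_{k_1}X_1\forall_{k_2}X_2\exists_{k_3}X_3\cdots\exists_{k_l}X_l$. The outermost existential block $X_1$, together with all later existential blocks, is handled exactly as in the $\W[1]$ case: the proof string contains a truth assignment $A$ (a purported multilinear function $\mathbb{B}^m\to\mathbb{B}$), and the verifier runs the multilinearity test of Feige \emph{et al.}~\cite{FeigeGLSS96} followed by the Lund \emph{et al.}~\cite{LundFKN92}-style sum-check on the arithmetized formula $S(A)$ and on the weight-check $\sum_{z\in\mathbb{B}^m}A(z)=k$. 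The new ingredient is the universal blocks $X_2,X_4,\ldots,X_{l-1}$: for each such block, the verifier must be convinced that \emph{every} weight-$k_i$ assignment to $X_i$ admits a completion of the remaining existential blocks. The key point, already flagged in the discussion preceding the statement, is that there are only $\binom{|X_i|}{k_i}=n^{O(k_i)}$ such assignments, so their total number over all universal blocks is $n^{O(k')}$ where $k'=k_2+k_4+\cdots+k_{l-1}$; since the verifier is allowed $(f(k)+n^{f(k)})$ time, it can afford to branch over all of them.

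The concrete protocol I would use: the proof string is organized as a tree of depth $\lfloor l/2\rfloor$, where at each universal level the tree branches into one child for every weight-$k_i$ subset of $X_i$, and each leaf stores a full arithmetization certificate (a multilinear assignment table plus the sum-check polynomial coefficients) for the residual $\W[1]$-type instance obtained by fixing all the universal variables chosen along the root-to-leaf path to \textsc{True}. The verifier enumerates all root-to-leaf paths --- this is the $n^{k'}$ factor --- and for each one it: (i) substitutes the fixed universal literals into the clause-selector functions $C_{c,i}$ and into the arithmetization $S(\cdot)$, which is a purely local, \FPT{}-time (indeed polynomial in $n$ per path) operation on the multilinear extensions; (ii) runs the multilinearity test and the two sum-check protocols of Theorem~\ref{thm:wsat-pcp} on the resulting residual instance, using fresh randomness, over a field $\mathcal{F}$ with $|\mathcal{F}|>md/\varepsilon'$ where $\varepsilon'$ is chosen so that a union bound over the $n^{k'}$ paths still yields overall error at most $\varepsilon$. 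Because $\log|\mathcal{F}|\in O(\log m)$ and each residual sum-check uses $O(m\log m)$ proof bits and $O(m\log m)$ random bits, and there are $n^{k'}$ paths, the totals are $O(n^{k'}\cdot m\log m)$ of each, matching the claimed restriction. Completeness is immediate: a \textsc{Yes}-instance supplies, for every universal path, a genuine satisfying weight-$(k_1+k_3+\cdots+k_l)$ assignment of the remaining existential variables, whose honest arithmetization passes with probability $1$.

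For soundness, suppose $(\phi,X_1,\ldots,X_l,k)$ is a \textsc{No}-instance. Then for \emph{every} choice of weight-$k_1$ subset of $X_1$ there is \emph{some} sequence of weight-$k_i$ universal subsets (one per universal block) such that \emph{no} completion of the existential blocks after it satisfies $\phi$ with the right total weight. The verifier's enumeration visits exactly that root-to-leaf path; on it, the residual instance has no valid certificate, so by the soundness of the Feige \emph{et al.}{} multilinearity test together with the Schwartz--Zippel~\cite{Schwartz80} analysis of the sum-check (as in Theorem~\ref{thm:wsat-pcp}), the verifier rejects on that path with probability at least $1-\varepsilon'$; since a rejection on any single visited path causes global rejection, we reject with probability at least $1-\varepsilon'\geq\frac12$ after adjusting constants. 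One subtlety to handle carefully is the first existential block: the honest prover must commit to a \emph{single} weight-$k_1$ assignment of $X_1$ that works against all universal strategies, but the verifier never enumerates $X_1$-choices --- it reads one assignment $A$ from the proof; so the weight-check $\sum_{z\in X_1\text{-part}}A(z)=k_1$ and the multilinearity test pin $A$ down globally, and the per-path residual check then uses that same committed $A$ on $X_1$ while branching only over $X_2,X_4,\ldots$. The step I expect to be the main obstacle is arranging the arithmetization so that substituting the fixed universal literals genuinely reduces to an instance of the Theorem~\ref{thm:wsat-pcp} form with the degree bound preserved --- in particular ensuring the partially-instantiated $S(A)$ remains a constant-degree multinomial (it does, since fixing variables to constants in a multilinear $C_{c,i}$ keeps it multilinear) and that the weight bookkeeping splits cleanly as $k=k_1+(k_3+\cdots+k_l)+$ (contribution of the fixed universal variables), which must be exactly $k_2+k_4+\cdots+k_{l-1}$ on every visited path or the path is trivially rejected.
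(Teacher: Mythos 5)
Your overall strategy is the same as the paper's: enumerate all $O(n^{k'})$ weight-respecting assignments to the universal blocks $X_2,X_4,\ldots,X_{l-1}$, substitute each into $\phi$ to get a residual purely existential ($\W[1]$-type) instance, and run the Theorem~\ref{thm:wsat-pcp} protocol on each with separate weight checks $k_j$ for the odd blocks; your bit accounting of $O(n^{k'}\cdot m\log m)$ matches, and your insistence that the $X_1$-assignment be a single globally committed table is if anything more explicit than the paper's sketch.

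There is, however, a genuine gap in your soundness argument as soon as there are two or more universal blocks ($l\geq 5$). You claim that a \No{}-instance yields, for every weight-$k_1$ choice on $X_1$, a single sequence of universal subsets $(S_2,S_4,\ldots,S_{l-1})$ defeating \emph{all} completions of the later existential blocks. That is not what negating the alternation gives: for $l=5$ the negation is $\forall S_1\,\exists S_2\,\forall S_3\,\exists S_4\,\forall S_5\,\neg\phi$, in which $S_4$ may adapt to the existential reply $S_3$, whereas your claim is the strictly stronger $\forall S_1\,\exists S_2\,\exists S_4\,\forall S_3\,\forall S_5\,\neg\phi$. Moreover your proof layout makes this more than a write-up issue: by storing a complete certificate at each \emph{leaf}, you let the purported $X_3$-assignment depend on the later universal choices $S_4,\ldots,S_{l-1}$ (you enforced cross-path consistency only for $X_1$), so what your verifier actually certifies is $\exists_{k_1}X_1\,\forall X_{even}\,\exists(X_3,\ldots,X_l)\,\phi$; a cheating prover can therefore make every per-path residual check pass with probability $1$ on a \No{}-instance in which the existential player wins only when shown all universal moves in advance. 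The repair is to respect the quantifier dependency structure: the table for $X_{2j+1}$ must live at the depth-$j$ internal node of your tree, i.e.\ be a function of $S_2,\ldots,S_{2j}$ only, and soundness is then argued not by your ``bad sequence'' claim but by walking the universal player's winning (adaptive) strategy down the tree against the committed tables; this isolates one specific path whose residual \WSAT{$\Gamma^{-}_{1,2}$} instance is unsatisfiable at the prescribed weights, and the Theorem~\ref{thm:wsat-pcp} check on that path rejects with high probability. (The paper's own proof indexes the assignments $A^{s}_{j}$ by the full universal string $s$ and is silent on this dependency point, so your proposal is faithful to its spirit; but since you spelled the soundness argument out, it needs this correction.)
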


\begin{proof}
The verifying TM $V$ begins by generating the $O(n^{k'})$ assignments to the variables of $X_{even} = X_{2} \cup X_{4} \cup \ldots X_{l-1}$. In effect we can treat this as a simple string $s$ over $\{0,1\}^{\Card{X_{even}}}$, which we will use to index elements of the truth assignment given in the proof string (which again we can treat as a table). For each assignment to $X_{even}$ we can reduce the input formula $\phi$ appropriately in polynomial time, substituting in the values of the literals and simplifying the formula to $\phi'$.

We then have a series of $\Gamma_{1,2}^{-}$ formul\ae{} with only \emph{existential} qualification, but this is equivalent to an instance of \WSAT{$\Gamma_{1,2}^{-}$}, only with the slight constraint that the truth assignment is required to consist of $\frac{l+1}{2}$ parts, corresponding to the odd indexed variable sets $X_{1},\ldots,X_{l}$.

Thus we can apply the protocol used for $\W[1]$, with the slight change that instead of checking simply that $\sum_{x_{i}}A(x_{i})=k$, we check the sequence of truth assignments $A^{s}_{j}$ where $j \in \SB 2h-1 \SM h \in \mathbb{N}^{+} \SE$, ensuring that for each the weight is $k_{j}$. 
\end{proof}

\begin{corollary}
$\A[l] \subseteq$ \ppcp{f(k)n^{g(k)}\cdot \log (f(k)n^{O(1)})\log\log (f(k)n^{O(1)})}{f(k)n^{g(k)}\cdot \log (f(k)n^{O(1)})\log\log (f(k)n^{O(1)})} for all $l \geq 1$, where $n$ is the size of the input, $k$ is the parameter and $g$ and $f$ are computable functions.
\end{corollary}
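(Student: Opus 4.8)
The plan is to combine the Theorem above with the known completeness of the alternating weighted satisfiability problems for the levels of the $A$-hierarchy, treating odd and even $l$ slightly differently. For odd $l$ we have $\A[l] = [\AWSATL{l}{$\Gamma^{-}_{1,2}$}]^{\FPT{}}$, so given $\Pi \in \A[l]$ I would take an fpt reduction $R$ from $\Pi$ to \AWSATL{l}{$\Gamma^{-}_{1,2}$}: on input $(x,k)$ with $\Card{x}=n$ it runs in time $f_{0}(k)n^{O(1)}$ and outputs an instance whose formula has $N \leq f_{0}(k)n^{O(1)}$ variables and clauses, with the new parameter $K = \sum_{i}K_{i} \leq g_{0}(k)$. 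Setting $m = \lceil \log N\rceil = O(\log(f_{0}(k)n^{O(1)}))$ and $k' = \sum_{i\text{ even},\,i<l}K_{i} \leq g_{0}(k)$, the Theorem gives an $(N^{k'}m\log m,\,N^{k'}m\log m)$-restricted $(f(K)+N^{f(K)})$-time verifier for the reduced instance. Composing with $R$ (which is within the $f(k)+n^{f(k)}$ budget, an fpt algorithm being in particular an $n^{f(k)}$-time one) yields an $n^{k}$-$p$-PCP for $\Pi$ whose running time is again of the form $f'(k)+n^{f'(k)}$.

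The remaining work is to rewrite the bit bounds in terms of $n$ and $k$. Here $N^{k'} = (f_{0}(k)n^{O(1)})^{g_{0}(k)}$; since $f_{0}(k)^{g_{0}(k)}$ is a computable function of $k$ alone, this is $f_{1}(k)\,n^{g_{1}(k)}$, and $m\log m = \log(f_{0}(k)n^{O(1)})\log\log(f_{0}(k)n^{O(1)})$ is absorbed into the $\log(f(k)n^{O(1)})\log\log(f(k)n^{O(1)})$ factor of the statement, giving the claimed bound on both the random and the proof bits. For even $l$, where $\A[l] = [\AWSATL{l}{$\Delta^{+}_{1,2}$}]^{\FPT{}}$, I would reduce to the odd case using the standard inclusion $\A[l] \subseteq \A[l+1]$ (Flum \& Grohe): an fpt reduction appending an empty final quantifier block carries $\A[l]$ into $\A[l+1]$ with $l+1$ odd, and the parameter and size blow-ups of that reduction are simply folded into $f$ and $g$. (Note $l=1$ recovers Corollary~\ref{cor:w[1]-pcp} since $\A[1]=\W[1]$.) Equivalently one can attempt a ``native'' protocol for even $l$ by dualising: generate all weight-assignments to the universally quantified blocks $X_{2},X_{4},\ldots,X_{l}$, substitute into $\phi\in\Delta^{+}_{1,2}$, and run the sumcheck on the resulting purely existential positive $2$-DNF instances.

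The odd-case bookkeeping is routine; the genuinely delicate point is the even case if one insists on this native protocol rather than the $\A[l]\subseteq\A[l+1]$ shortcut. The natural arithmetisation of a disjunction $\bigvee_{t}\bigwedge_{a\in t}a$, namely $\prod_{t}\bigl(1-\prod_{a\in t}A(a)\bigr)$, has degree equal to the \emph{number of terms} rather than a constant, which destroys the constant-degree hypothesis underlying the Schwartz--Zippel bound and hence the soundness analysis of the whole sumcheck. Repairing it requires the same device used in the $\W[1]$ proof for collapsing a conjunction of clauses --- replacing the product by a random linear combination $\sum_{t}SC(A,t)\prod_{i}r_{i}^{t_{i}}$ so that the encoding is, with high probability, zero exactly when no term is satisfied while the degree stays constant --- together with a check that this remains sound in the alternating, weight-restricted setting. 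That is where the real care would be needed; for this corollary I would take the $\A[l]\subseteq\A[l+1]$ route and avoid it entirely.
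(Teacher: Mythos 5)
Your proposal is correct and follows essentially the same route as the paper: for odd $l$ reduce via completeness of \AWSATL{l}{$\Gamma^{-}_{1,2}$} and invoke the preceding theorem, and for even $l$ use the containment $\A[l] \subseteq \A[l+1]$ rather than handling $\Delta^{+}_{1,2}$ directly, which is exactly the paper's (much terser) argument. Your extra bookkeeping of the size and parameter blow-up, and your remark on why a native $\Delta^{+}_{1,2}$ arithmetization would be delicate, go beyond what the paper writes but do not change the approach.
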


\begin{proof}
As $\A[l]$ is closed under fpt-reductions, if $l$ is odd, we can reduce the input instance to an instance of \AWSATL{l}{$\Gamma^{-}_{1,2}$} with at most $f(k)n^{O(1)}$ clauses and variables, with parameter $g(k)$.

By containment, if $l$ is even, we can reduce the input to an instance of \AWSATL{l+1}{$\Gamma^{-}_{1,2}$}.
\end{proof}

We note particularly that this $p$-PCP has the nice property of reducing to the $\W[1]$ $p$-PCP in the case where $l=1$. This is a generally desirable property as $\A[1] = \W[1]$ (though in general we only expect that $\W[t] \subseteq \A[t]$).

\begin{corollary}
$\AW[*] \subseteq$ \ppcp{f(k)n^{g(k)\cdot \lfloor\frac{l}{2}\rfloor}\cdot \log (f(k)n^{O(1)})\log\log (f(k)n^{O(1)})}{f(k)n^{g(k)\cdot \lfloor\frac{l}{2}\rfloor}\cdot \log (f(k)n^{O(1)})\log\log (f(k)n^{O(1)})}
\end{corollary}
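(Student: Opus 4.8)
The plan is to mirror the proof of the preceding corollary on $\A[l]$, using that \AWSAT{$\Gamma^{-}_{1,2}$} is $\AW[*]$-complete and that $\AW[*]$ is closed under fpt reductions. First I would take an arbitrary $\Pi \in \AW[*]$ together with an instance $(x,k)$, and apply an fpt reduction to \AWSAT{$\Gamma^{-}_{1,2}$}: this produces a formula $\phi \in \Gamma^{-}_{1,2}$ with at most $f(k)n^{O(1)}$ variables and clauses, a partition of its variables into $l$ blocks $X_{1},\ldots,X_{l}$, and positive integer weights $k_{1},\ldots,k_{l}$ with $\sum_{i}k_{i}\leq g(k)$ for computable $f,g$. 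The observation that makes everything go through is that, since each $k_{i}\geq 1$, the number of alternations satisfies $l\leq\sum_{i}k_{i}\leq g(k)$, so $l$ is itself bounded by a function of the parameter; this is what keeps the exponent of $n$ a function of $k$ alone, and it is why (just as the paper already notes for the $A$-hierarchy) we must work with the relaxed $n^{k}$-$p$-PCP notion --- a notion that is legitimate here because \AWSAT{$\Gamma^{-}_{1,2}$} is decidable in $n^{O(k)}$ time by branching over the $n^{k_{1}}n^{k_{2}}\cdots n^{k_{l}}$ choices, so that $\AW[*]\subseteq\XP{}$.

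Next, to invoke the previous theorem, which is phrased for odd $l$, I would, when $l$ is even, embed the instance into \AWSATL{l+1}{$\Gamma^{-}_{1,2}$} by appending a trivial block, exactly the device used in the $\A[l]$ corollary. The theorem then supplies an $(n^{k'}\cdot m\log m,\,n^{k'}\cdot m\log m)$-restricted $(f(k)+n^{f(k)})$-time verifier, where $k' = k_{2}+k_{4}+\cdots$ is the sum of the even-indexed weights and $m$ is least with $2^{m}\geq\max\{\var(\phi),\cl(\phi)\}$, so $m = O(\log(f(k)n^{O(1)}))$. Since there are $\lfloor l/2\rfloor$ even-indexed blocks and each $k_{i}\leq g(k)$, we obtain $k'\leq g(k)\cdot\lfloor l/2\rfloor$; substituting this together with the bound on $m$, and absorbing the hidden constants into $f$, gives both the random-bit and the proof-bit budgets as $f(k)n^{g(k)\lfloor l/2\rfloor}\cdot\log(f(k)n^{O(1)})\log\log(f(k)n^{O(1)})$, which is the claimed bound.

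The one step that needs care --- and the main obstacle --- is verifying that the verifier still respects its $f(k)+n^{f(k)}$ time budget now that $l$ may grow with $k$ rather than being a fixed constant. The verifier of the previous theorem enumerates all $O(n^{k'})$ assignments to the union of the even-indexed blocks, reduces $\phi$ under each to a purely existential \WSAT{$\Gamma^{-}_{1,2}$} instance (with the weight split into $\lceil l/2\rceil$ parts), and runs the multilinearity test and the Lund-style sumcheck on each; with $k' = O((g(k))^{2})$ this is $n^{O((g(k))^{2})}$ work, and the per-instance proof and randomness demands are multiplied by exactly the $n^{k'}$ factor already accounted for, so everything stays of the form $n^{f'(k)}$ for a computable $f'$. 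Beyond this bookkeeping, no new arithmetization, field-size choice, or soundness analysis is needed beyond what Theorem~\ref{thm:wsat-pcp} and its $A$-hierarchy analogue already establish: the statement is genuinely a corollary, and its content is precisely the fact that the alternation count of the reduced instance is controlled by the parameter.
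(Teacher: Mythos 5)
Your proposal is correct and follows essentially the same route as the paper's (very terse) proof: reduce the given $\AW[*]$ problem to an \AWSAT{$\Gamma^{-}_{1,2}$} instance of size $f(k)n^{O(1)}$, pad with a trivial block when $l$ is even, and invoke the preceding theorem with the number of even-indexed blocks bounded by $\lfloor l/2\rfloor$ so that $k'\leq g(k)\lfloor l/2\rfloor$ and $m=O(\log(f(k)n^{O(1)}))$. Your added observations --- that $l\leq\sum_i k_i\leq g(k)$ since each $k_i\geq 1$, and the explicit check of the $f(k)+n^{f(k)}$ time budget --- are correct refinements the paper leaves implicit, not a different argument.
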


\begin{proof}
Any problem in $\AW[*]$ can be reduced to an instance of \AWSAT{$\Gamma^{-}_{1,2}$}. In this case $l$ is not fixed, but part of the input. However for a given instance, the number of even-index variable sets is at most $\lfloor\frac{l}{2}\rfloor$.
\end{proof}

\section{Conclusion}

The development of parameterized PCPs, of which this is simply a first step, may have interesting results, particularly for parameterized approximation theory. Currently non-trivial parameterized approximations are few, and the status of key problems such as \textsc{Clique} and \textsc{Dominating Set} are essentially unknown. For parameterized PCPs to have an impact on this however, results need to be improved and extended. By employing directly the construction of Feige~\emph{et al.}~\cite{FeigeGLSS96} for \textsc{Max-Clique} we could obtain results if we can reduce the number of random bits of a $p$-PCP containing $\W[1]$ to a function of $k$ alone. This seems possible for the main part of the checking protocol --- we can simply randomly generate only $k$ of the values, and take all others as constant (say 0), with a corresponding alteration in the size of the field over which the values are generated, the probability of incorrectly accepting is in essence no different. A similar alteration to the multilinearity testing however is much more difficult. Another possible approach would be to explore the intersection of Dinur's~\cite{Dinur07} proof of the PCP theorem which employs certain constraint satisfaction problems and recent hardness results for parameterized versions of constraint satisfaction~\cite{BulatovMarx11}.

Extending the result of this paper to cover other classes also seems to be non-trivial, the alternation of boolean operators of unbounded arity in propositional classes that define the classes $\W[t]$ seems to preclude retaining the constant degree property essential to the protocol presented here (this is not a problem for $\NP{}$ as we do not need to keep track of the weight of the satisfying assignment, so the polynomial expansion experienced in reducing a formula to \textsc{3-CNF} creates no problem). However it seems likely that a tight $p$-PCP for $\W[1]$ would be part of a broader $p$-PCP that generalizes to $\W[t]$ for all $t$, implying that $t$ will play an important role in the final complexity description.

In the other direction it would be interesting to obtain a more general $p$-PCP for the other \PSPACE{} related parameterized classes, particularly $\AW[SAT]$ and $\AW[P]$.

\bibliography{PCP}
\bibliographystyle{plain}

\end{document}